\setlist[enumerate]{leftmargin=.5in}
\setlist[itemize]{leftmargin=.5in}
\newcommand{\add}[1]{\textcolor{black}{{#1}}}
\title{Non-Markovian models of opinion dynamics on temporal networks}
\author{Weiqi Chu\thanks{Department of Mathematics, University of California,  Los Angeles
 }
 \and Mason A. Porter\thanks{Department of Mathematics, University of California,  Los Angeles and Santa Fe Institute
 }
}
\begin{document}

\maketitle

\begin{abstract}
Traditional models of opinion dynamics, in which the nodes of a network change their opinions based on their interactions with neighboring nodes, consider how opinions evolve either on time-independent networks or on temporal networks with edges that follow Poisson statistics. Most such models are Markovian. However, in many real-life networks, interactions between individuals (and hence the edges of a network) follow non-Poisson processes and thus yield dynamics with memory-dependent effects. In this paper, we model opinion dynamics in which the entities of a temporal network interact and change their opinions via random social interactions.
When the edges have non-Poisson interevent statistics, the corresponding opinion models are have non-Markovian dynamics. We derive an opinion model that is governed by an arbitrary waiting-time distribution (WTD) and illustrate a variety of induced opinion models from common WTDs (including Dirac delta distributions, exponential distributions, and heavy-tailed distributions). We analyze the convergence to consensus of these models and prove that homogeneous memory-dependent models of opinion dynamics in our framework always converge to the same steady state regardless of the WTD. We also conduct a numerical investigation of the effects of waiting-time distributions on both transient dynamics and steady states. We observe that models that are induced by heavy-tailed WTDs converge to a steady state more slowly than those with light tails (or with compact support) and that entities with larger waiting times exert a larger influence on the mean opinion at steady state.

\end{abstract}

\begin{keywords}
opinion dynamics, temporal networks, non-Markovian models, waiting-time distributions
\end{keywords}

\begin{MSCcodes}
91D30, 37H10, 05C80
\end{MSCcodes}

%%%%%%

%%%%%%

\section{Introduction}

The structure of networks has a major influence on the dynamics of complex systems of interacting entities in social, economic, information, biological, and physical systems \cite{newman2018networks}. In a network, entities interact via edges, which encode ties with time-dependent strengths.
To model a networked system, it is important both to account for the time-dependence of edges and to examine the effects of (both time-independent and time-dependent) network structures on dynamical processes \cite{porter2016}, such as opinion formation \cite{noor2020}, the spread of infectious diseases \cite{kiss2017}, and e-mail communication \cite{eckmann2004entropy}. 

When modeling real-life networks, it is convenient to assume that edges, which encode events between humans or other entities, appear randomly in a way that is well-captured by a Poisson process. This assumption results in time-dependent networks (i.e., so-called ``temporal networks'') with memoryless stochastic effects \cite{masuda2016guide}. It overlooks the non-Markovian and nonstationary nature of many systems \cite{barabasi2005origin}, such as e-mail traffic, online communication, and others. 
\add{To incorporate memory effects and to model dynamics with bursty and heavy-tailed interevent times \cite{karsai2011small,scholtes2014causality}, it is important to consider temporal networks with edges that appear according to stochastic processes other than Poisson processes.
}

Many dynamical processes on networks are non-Markovian \cite{feng2019equivalence,gleeson2014}, which introduces nontrivial memory-dependence into their dynamics. For example, in a social contagion, entities typically require multiple sources of influence (i.e., social reinforcement) to adopt some idea or behavior \cite{sune-yy2018}. Such cumulative effects occur in the adoption of social norms and technologies \cite{bandura1963influence,centola2010spread}.
There have been a variety of efforts to incorporate memory effects into dynamical processes on networks. Such studies include generalizations of voter models \cite{chen2020non,takaguchi2011voter,baron2022analytical}, compartmental models of disease spread \cite{starnini2017equivalence,van2013non,kiss2015generalization,feng2019equivalence}, social-contagion models \cite{wang2015dynamics}, and random walks~\cite{lambiotte2015effect}.

Most research on incorporating memory effects into dynamical process on networks has focused on binary-state models, in which node states can take one of two values (e.g., susceptible or infected), but it is also important to examine memory effects in models in which nodes can take continuous values. 
Most such research incorporates memory-dependence directly into network structure through time-dependent edge weights and overlooks the effects of the previous node states. For example, Sugishita et al. considered an opinion model on ``tie-decay networks'' \cite{sugishita2021opinion}, in which interactions between entities are time-dependent and result in ties whose strengths increase instantaneously when an interaction occurs and decay exponentially between interactions. However, simply placing a dynamical process on a tie-decay network accounts only for the states during the most recent interaction; it ignores how states changed with time to attain their current values. By contrast, we seek a model formulation that accounts for the complete {history} of states of a dynamical process.

In addition to the just-discussed temporal influence of networks, dynamical processes on networks are also impacted significantly by network architecture \cite{porter2016}. For example, Meng et al.~\cite{meng2018opinion} studied how time-independent network structures affect the steady state and the convergence properties of bounded-confidence models (BCMs) of opinion dynamics. Sood et al. \cite{sood2005voter} investigated the relationship between a network's degree distribution and the convergence time of a voter model on that network. Delvenne et al. \cite{delvenne2015} compared the effects of the structural and temporal features of networks on dynamical processes. In particular, they examined when a diffusion process is affected more strongly by network architecture or by a network's temporal features. Such investigations emphasize the importance of considering generic network structures when studying both transient and long-time qualitative behaviors of dynamical processes.

In the present study of non-Markovian opinion dynamics on networks, we consider arbitrary weighted networks and allow the waiting time between events to follow an arbitrary probability distribution, which is known as a ``waiting-time distribution'' (WTD). When the edges of a network satisfy {Poisson} statistics, the times between events follow independent exponential distributions \cite{medhi1975waiting} and thereby lead to memoryless models in which the dynamics depends only on a network's present state. Instead of using such a restrictive setting, we consider a generic WTD and use known results about interevent times \cite{kivela2015estimating} to systematically construct memory-dependent models of opinion dynamics.
This setting allows us to study models that capture time-dependent interactions between entities and naturally incorporate memory effects.

Our paper proceeds as follows. In Section \ref{sec: models}, we propose a family of memory-dependent 
models of opinion dynamics on temporal networks in which the social interactions between two entities are determined by arbitrary WTDs.
We illustrate the corresponding opinion models for several examples of both discrete and continuous WTDs. We prove that ``homogeneous models'' (in which all nodes follow the same WTD) of this type converge to the same steady state regardless of the WTD, and we give conditions for consensus in both homogeneous models and ``heterogeneous models'' (in which nodes can follow different WTDs).
In Section \ref{sec: numerics}, we {examine} our memory-dependent opinion
models {on} three types of graphs and {investigate} how WTDs affect the overall dynamics both transiently and at steady state. 
We conclude in Section \ref{sec: summary}. 
Our code is available at \url{https://bitbucket.org/chuwq/non-markovian-models-of-opinion-dynamics-on-temporal-networks/src/main/}.

%%%%%%%%%%%%%%%%%%%%%%%%%%%%%%%%%%%%%%%%%%%%%%%%%%%%%%%%%%%%%%%%%%%%%%%%%%%%%%%%%%%%%%%%%%%%%%%%%%%%%%%%%%%%%%%

\section{Opinion models that are induced by waiting-time distributions (WTDs)} 
\label{sec: models}

\add{Let $G$ be a weighted and directed graph with $N$ nodes. We represent this graph using an the adjacency matrix $A$. Entry $A_{ij}$ of this matrix gives the weight of the edge from node $i$ to node $j$; it encodes the interaction strength from entity $i$ to entity $j$.\footnote{To consider an unweighted graph, we let $A_{ij} = 1$ when there is an edge and $A_{ij} = 0$ when there is not an edge. In an undirected graph, $A_{ij} = A_{ji}$ for all $i$ and $j$.}
We assume that the entries of $A$ are nonnegative real numbers and each row sum of $A$ is larger than $0$. The \emph{row-normalized adjacency matrix} $\widetilde{A}$ has entries $\widetilde{A}_{ij} = A_{ij}/\sum_{j=1}^N A_{ij}$.
Entity $i$ has a time-dependent continuous-valued opinion $X_i(t)$ and an internal clock $\tau_i$, which {indicates its waiting time.}
Each entity maintains its opinion until there is an event (which is determined by $\tau_i$). 
The waiting time $\tau_i$ between two consecutive events of entity $i$ follows a WTD $T_i(\tau)$. When an event occurs, entity $i$ adopts the opinion of an adjacent node\footnote{When $A_{ij} > 0$, node $j$ is adjacent to node $i$. When $A_{ij} = 0$, nodes $i$ and $j$ are not adjacent.} $j$ with probability $\widetilde{A}_{ij}$ and it resets its waiting time $\tau_i$ to $0$. 
When $A_{ii} > 0$, entity $i$ is adjacent to itself, so it can choose itself (with probability $\widetilde{A}_{ii}$) when selecting a node from which to adopt an opinion. If this occurs, entity $i$ keeps its current opinion after the event and it resets the waiting time $\tau_i$ to $0$.
In Figure \ref{fig: diagram}, we give an example of a two-node graph and illustrate how its entities update their opinions.
}

\begin{figure}[htp!]
    \centering
    \includegraphics[width=0.8\textwidth]{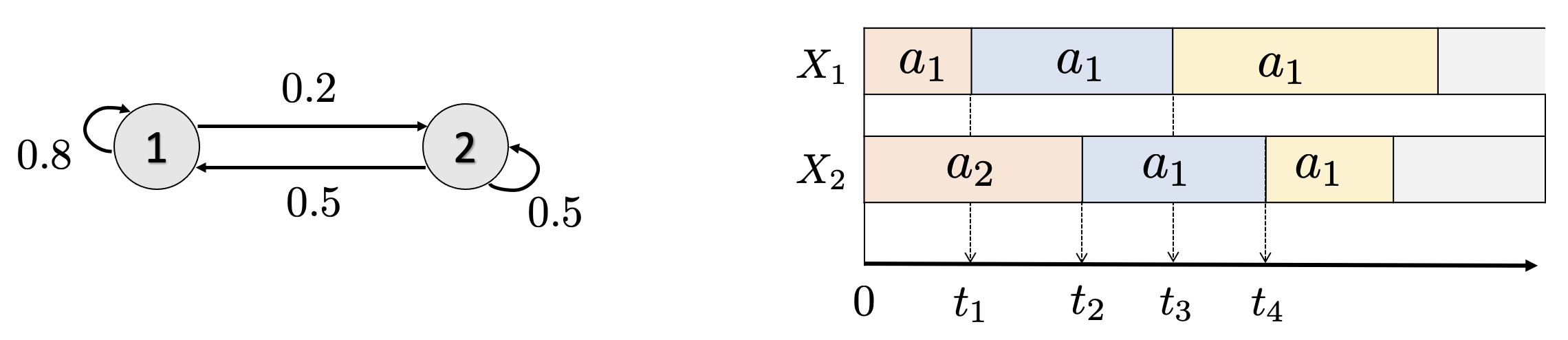}
    \caption{\add{(Left) A $2$-node weighted graph with self-edges and (right) the trajectories of associated opinion trajectories ($X_1$ and $X_2$). Initially, the two entities of the graph have opinions $a_1$ and $a_2$. At time $t_1$, entity $1$ experiences an event; it adopts opinion $a_1$ (which is the opinion of entity 1) with probability $\widetilde{A}_{11} = 0.8$, and it adopts the opinion $a_2$ (which is the opinion of entity 2) with probability $\widetilde{A}_{12} = 0.2$. In the depicted scenario, entity $1$ adopts opinion $a_1$ and holds that opinion until its next event, which occurs at time $t_3$. Entity $2$ holds opinion $a_2$ until its first event occurs at time $t_2$. Entity $2$ chooses the opinion $a_1$ of entity $1$ at time $t_2$; this results in a consensus state, in which both entities hold the same opinion $a_1$.}
    }
    \label{fig: diagram}
\end{figure}

\add{The internal clock $\tau_i$ and WTD $T_i(\tau)$ determine when entity $i$ can update its opinion, and the normalized weight $\widetilde{A}_{ij}$ gives the probability that entity $i$ adopts the opinion of entity $j$. 
We assume that the events of different entities occur independently. If entity $i$ adopts opinion $X_j(t)$ at time $t$, entity $j$ still updates its opinion according to its internal clock $\tau_j$ without noticing that entity $i$ has adopted its opinion.} Therefore, we consider unidirectional interactions between entities. 
Such interactions arise in many social and biological systems \cite{guttal2010social}. For example, followers of a social-media account can update their opinions by merely reading posts without commenting or otherwise actively communicating with that account.
As in \cite{hoffmann2012generalized}, our graphs $G$ are temporal networks because of the WTDs of the nodes.

In the following subsections, we study models of opinion dynamics that are induced by the WTDs of nodes on temporal networks. In Section \ref{sec: general_process}, we derive a master equation for the time-dependent opinions of $N$ nodes, which can have different WTDs.
In Section \ref{sec: derivation}, we examine models of opinion dynamics that use several well-known WTDs (Dirac delta distributions, exponential distributions, and heavy-tailed distributions) for the interevent times. In Section \ref{sec: analysis}, we examine convergence and consensus in our non-Markovian opinion models for both homogeneous and heterogeneous systems.

%%%%%

\subsection{Master equations for opinion dynamics with arbitrary WTDs} \label{sec: general_process}

Let $f_i(x,t)$ be the probability density function (PDF) of $X_i(t)$ on an opinion space $\Omega$, and let $q_i^{(k)}(x,t)$ be a \add{PDF} on $\Omega\times \mathbb{R}_{\ge 0}$. The PDF $q_i^{(k)}(x,t)$ governs the probability that entity $i$ adopts opinion $x$ at time $t$ in its $k$th event. For each entity 
$i$, we have
\begin{align}
    \int_{\Omega} f_i(x,t)\,dx &= 1 \text{  \, for any \, } t\ge0\,, \notag \\  
    \int_0^{\infty}\!\!\int_{\Omega} q_i^{(k)}(x,t)\,dx\,dt &= 1 \text{  \, for any \, } k\in \mathbb{N} = \{0, 1, \ldots \} \,.
\end{align}
\add{The $0$th event of each entity occurs at time $0$. At time $0$, entity $i$ updates its opinion according to the PDF $f_i(x,0)$ and sets its internal waiting-time clock $\tau_i$ to $0$.}
Suppose that entity $i$ holds opinion $x$ at time $t > 0$. This opinion arises from an earlier event at some time $t' \in [0, t)$ with no event of entity $i$ between times $t'$ and $t$. 
In mathematical terms, 
\begin{equation} \label{eq: eqfi}
    f_i(x,t) = \sum_{k=0}^{\infty}\int_0^t \phi_i(t-t')q^{(k)}_i(x,t')\,dt'\,,
\end{equation}
where 
\begin{equation}\label{eq: phi-eq}
    \phi_i(t) = 1- \int_0^t T_{i}(t')\,dt'
\end{equation}
is the probability that an event of entity $i$ does occur after waiting for time $t$. The function $\phi_i(t)$ is the survival function with respect to $T_i$. 
To iterate between two consecutive events of the same entity $i$, 
we write
\begin{equation} \label{eq: eqqi}
\begin{aligned}
    q_i^{(k+1)}(x,t) &= \sum_j \left[ \int_0^t \int_{\Omega} \add{q_i^{(k)}}(y,t')T_{i}(t-t')\,dy\,dt'\right] \widetilde{A}_{ij}f^-_j(x,t)\,, \quad k\in \mathbb{N}\,,
    \\
    q_i^{(0)}(x,t) &= f_i(x,0)\delta(t)\,.
\end{aligned}
\end{equation}
The density $f^-_j(x,t)$ is the limit of $f_j(x,\tau)$ as $\tau \rightarrow t^-$ (i.e., $f^-_j(x,t)=\lim_{\tau\rightarrow t^-}f_j(x,\tau)$). 
If entity $i$ has an event at time $t$ and entity $i$ adopts the state of entity $j$, we change $i$'s opinion $X_i$ to $X_j(t^-)$, which is the opinion of entity $j$ right before a possible event of entity $j$ at time $t$. When $T_{i}(t)$ does not possess a point mass (i.e., $T_{i}(t)$ does not have a Dirac delta measure), $f^-$ and $f$ are the same because there is $0$ probability that two events occur simultaneously.
When $T_{i}(t)$ possesses a point mass (i.e., $T_{i}(t)$ has a positive probability at one or more isolated points), we need to distinguish between $X_i(t^-)$ and $X_i(t)$ to avoid ambiguity in situations when multiple events occur simultaneously.

Let $\bar{q}^{(k)}_i(t) = \int_{\Omega}q_i^{(k)}(x,t)\,dx$ be the PDF on $\mathbb{R}_{\ge0}$ that the $k$th event of entity $i$ occurs at time $t$. 
Using Equation \eqref{eq: eqqi}, we obtain
\begin{equation} \label{eq: barqi}
\begin{aligned}
    \bar{q}_i^{(k+1)}(t) &= \int_0^t \bar{q}_i^{(k)}(t')T_{i}(t-t')\,dt'\,, \quad k\in \mathbb{N} \,,    \\
    \bar{q}_i^{(0)}(t) &= \delta(t)\,.
\end{aligned}
\end{equation}
Combining Equations (\ref{eq: eqfi}, \ref{eq: eqqi}, \ref{eq: barqi}) yields the governing equation for the probability densities of the opinions:
\begin{equation} \label{eq: eqf}
\begin{aligned}
    f_i(x,t) = \sum_{j} \phi_i(t) \star \left[\widetilde{A}_{ij}\theta_{i} (t)f^-_j(x,t)\right] + \phi_i(t)f_i(x,0)  \,,
\end{aligned}
\end{equation}
where $\star$ denotes time convolution 
\add{and 
\begin{equation}\label{eq: thetai}
\theta_{i} = \sum_{k=0}^\infty \bar{q}_i^{(k)}\star T_{i}\,.
\end{equation}
Let the hat $\hat{\cdot}$ denote a Laplace transform. Using \eqref{eq: barqi} and \eqref{eq: thetai}, we obtain that} the Laplace transforms $\hat{\theta}_i$ and $\hat{T}_i$ satisfy
\begin{equation}\label{eq: theta_i}
    \hat{\theta}_i = \left( 1-\hat{T}_i\right)^{-1}\hat{T}_{i}\,.
\end{equation}
Let $x_i(t)=\int_{\Omega}xf_i(x,t)\,dx$ be the expectation of $X_i(t)$. A direct computation from Equation \eqref{eq: eqf} shows that $x_i(t)$ satisfies the integral equation
\begin{equation} \label{eq: model-discrete}
	 x_i(t) = \sum_j \widetilde{A}_{ij}\left[\phi_i\star \left( \theta_{i} x^-_j\right)\right](t) + \phi_i(t)x_i(0) \,,
\end{equation}
where $\phi_i$ and $\theta_i$ are defined in \eqref{eq: phi-eq} and \eqref{eq: theta_i}, respectively. \add{Equation \eqref{eq: model-discrete} gives a family of \emph{memory-dependent opinion models} that are induced by arbitrary WTDs $T_i(t)$.
When $f_i(x,t)$ is continuous with respect to $t$, we take the Laplace transform of Equation \eqref{eq: model-discrete} and obtain
\begin{equation} \label{eq: hatxi}
\hat{x}_i = \sum_j \widetilde{A}_{ij} \hat{\phi}_i \widehat{ \theta_{i} x^-_j} + \hat{\phi}_ix_i(0)\,.
\end{equation}
We use the equalities
\begin{equation}
    \hat{\phi}_i(s) = \frac1s\left(1-\hat{T}_i(s)\right) \quad \text{and}  \quad  x_i(0) = s\hat{x}_i(s) - \hat{\dot{x}}_i(s)\,,
\end{equation}
and rewrite \eqref{eq: hatxi} as
\begin{equation}  \label{eq: inverse}
	\hat{\dot{x}}_i = \sum_j \widetilde{A}_{ij} \widehat{\theta_{i} x^-_j} -  \frac{s\hat{T}_i}{1-\hat{T}_i} \hat{x}_i\,.
\end{equation}
By taking the inverse Laplace transform of Equation \eqref{eq: inverse}, we obtain a set of coupled ordinary differential equations:}
\begin{equation}\label{eq: model-continuous}
    \dot{x}_i(t) = \sum_j \widetilde{A}_{ij}\theta_{i} (t)x_j(t) - \left[\chi_i \star x_i\right](t)\,,
\end{equation}
where the Laplace transform of $\chi_i$ satisfies $\hat{\chi}_i(s) =  s \hat{T}_i(s) \left(1-\hat{T}_i(s)\right)^{-1}$. 
{Equation \eqref{eq: model-continuous} gives a family of memory-dependent opinion models that are induced by continuous-time WTDs.}

We say that the opinion models \eqref{eq: model-discrete} are \emph{homogeneous} if all nodes (i.e., entities) of a network have the same WTD $T(t)$, which in turn leads to homogeneous survival functions $\phi(t)$ and $\theta(t)$; otherwise, we say that the opinion models \eqref{eq: model-discrete} are \emph{heterogeneous}. 
For the homogeneous case, we can simplify the memory-dependent opinion models \eqref{eq: model-discrete} and write
\begin{equation} \label{eq: homo}
	 x_i(t) = \sum_j \widetilde{A}_{ij}\left[\phi \star \left( \psi x^-_j\right)\right](t) + \phi(t)\sum_j \left(\delta_{ij}- \widetilde{A}_{ij}\right)x_j(0)  \,,
\end{equation}
where $\delta_{ij}$ is the Kronecker delta function and $\psi(t)$ is related to the WTD $T(t)$ by $\hat{\psi}(s) = (1-\hat{T}(s))^{-1}$. 
From a direct computation, we see for all $t \ge 0$ that
\begin{equation} \label{eq: psi-prop}
    \psi(t) = \theta(t) + \delta(t)\,, \quad \left(\phi\star\psi\right)(t) = 1\,, \quad \psi(t)\ge 0\,,
\end{equation}
where $\delta(t)$ is the Dirac delta function. 

%%%

\subsection{Models of opinion dynamics that are induced by common WTDs} \label{sec: derivation}

We now examine opinion models \eqref{eq: model-discrete} that are induced by several common WTDs, including both discrete-time and continuous-time distributions. Some WTDs, such as the exponential distribution, yield a Markovian dynamical process with a time discretization that matches the DeGroot model of opinion dynamics \cite{degroot1974reaching}. Other WTDs, such as the gamma distribution, yield non-Markovian dynamics; in these systems, the opinion state depends on the entire history of all opinion values. We also examine models that arise from heavy-tailed WTDs and study approximations of them using a sum of Dirac delta measures (when we do not have explicit formulas for the inverse Laplace transforms).

%%%

\subsubsection{Dirac delta WTD}

Consider a situation in which events occur after entities wait for a fixed amount of time. 
That is, the WTD of each node  
is the Dirac delta distribution $T_{i}(t) = \delta(t - \triangle_i)$, which yields
\begin{equation}
    \hat{T}_{i}(s) = e^{-\triangle_is}\,, \quad 
    \theta_{i} (t) = \sum_{k=1}^{\infty}\delta(t-k\triangle_i)\,, \quad
    \phi_i(t) = \mathbb{1}_{[0,\triangle_i)}(t)\,.
\end{equation}
In Equation \eqref{eq: model-discrete}, $x_j^-(t)$ is the opinion right before entity $j$ changes its opinion at time $t$. Because entity $j$ updates its opinion  
after waiting for exactly time $\triangle_j$, we have $x_j^-(t)=x_j(t-\triangle_j)$. This yields the opinion model 
\begin{equation} \label{eq: delta-model-i}
    \begin{aligned}
        x_i(t) &= x_i(0)\,, \quad && t \in [0,\triangle_i)\,, \\
        x_i(t) &= \sum_{j} \widetilde{A}_{ij} x_j(t-\triangle_j)\,, \quad && t\in [\triangle_i,\infty)\,.
    \end{aligned}
\end{equation}

Suppose that all entities wait for the same amount of time $\triangle_i = 1$ before updating their opinions (i.e., the model is homogeneous). We can then write \eqref{eq: delta-model-i} in matrix form as the discrete dynamical system
\begin{equation} \label{eq: dirac-model}
    x(n+1) = Px(n)\,,\quad n \in \mathbb{N}\,,
\end{equation}
where $x(n) \in \mathbb{R}^N$ is the vector of opinions and the transition matrix is $P = \widetilde{A}$, which is the row-normalized adjacency matrix.
The model \eqref{eq: dirac-model} has the same form as the DeGroot model of opinion dynamics \cite{degroot1974reaching}.

%%%%

\subsubsection{Exponential WTD}
We now suppose that each entity has an exponential WTD, which is closely related to a Poisson point process. In a Poisson point process, the time between two consecutive events follows an exponential distribution \cite{medhi1975waiting}. Poisson processes have been studied extensively both because they are mathematically convenient and because they are memoryless \cite{mahmud2016poisson, hallas1997waiting}. We denote the exponential WTD of entity $i$ by $T_{i}(t) = \lambda_ie^{-\lambda_it}$, where the rate parameter $\lambda_i > 0$. From a direct computation, we obtain
\begin{equation}
\begin{aligned}
    \theta_{i} (t)= \lambda_i\,, \quad \chi_i(t) = \lambda_i\delta(t)\,,
\end{aligned}
\end{equation}
which we insert into \eqref{eq: model-continuous} to obtain the opinion model
\begin{equation} \label{eq: exp-xi}
    \dot{x}_i(t) = \lambda_i \sum_j \widetilde{A}_{ij} x_j(t) - \lambda_ix_i(t)\,.
\end{equation}
This yields a Markovian dynamical process $x(t)$ that satisfies
\begin{equation} \label{eq: exp-model}
    x(t) = e^{\Lambda (\widetilde{A}-I)t}x(0)\,,
\end{equation}
where $\Lambda$ is a diagonal matrix with entries $\Lambda_{ii}=\lambda_i$, the matrix $\widetilde{A}$ is the row-normalized adjacency matrix, and $I$ is the identity matrix. If we discretize the continuous-time opinion $x(t)$ at times $n = 0, \, \triangle t, \, 2\triangle t, \ldots$, we have a discrete-time description for $x(n)$ that satisfies the iterative relation
\begin{equation}
    x(n+1) = e^{\Lambda (\widetilde{A}-I)\triangle t}x(n)\,.
\end{equation}
This discrete model is equivalent to the DeGroot model with a transition matrix $P=e^{\Lambda (\widetilde{A} - I)\triangle t}$ (instead of $\widetilde{A}$). We can thus view the model \eqref{eq: exp-xi} as a continuous-time extension of the DeGroot model.

%%%%%

\subsubsection{Gamma WTD}
Another WTD with exponential decay is the gamma distribution, which has been used widely for modeling a variety of phenomena, including human response times \cite{iribarren2011branching}, earthquake interevent times \cite{touati2009origin}, and {delayed effects in pharmacodynamics} \cite{sun1998transit}. 
Suppose that entity $i$ follows the gamma WTD $T_{i}(t) = \lambda^2_{i}t e^{-\lambda_it}$. From a direct computation, we obtain
\begin{equation}
\begin{aligned} \label{gamma}
    & \theta_{i} (t) = \frac{\lambda_i}{2} \left(  1-e^{-2\lambda_it}\right), \quad \chi_i(t) = \lambda_i^2 e^{-2\lambda_it}\,.
\end{aligned}
\end{equation}
We insert \eqref{gamma} into \eqref{eq: model-continuous} to obtain the non-Markovian opinion model
\begin{equation} \label{eq: gamma-model}
\begin{aligned}
    \dot{x}_i(t) &= \sum_j \frac12 \widetilde{A}_{ij}\lambda_i\left(1-e^{-2\lambda_i t}\right)x_j(t) - \int_0^t \lambda_i^2 e^{-2\lambda_i (t-t')}x_i(t')\,dt'\,,
\end{aligned}
\end{equation}
which we rewrite in matrix form as
\begin{equation} \label{eq: memory-matrix}
    \dot{x}(t) = \mathcal{K}(t)  \left[ \widetilde{A}  x(t) - \tilde{x}\{t\}\right]\,,
\end{equation}
where $\tilde{x}_i\{t\}$ is a historically averaged opinion that is weighted by the exponential kernel $\kappa_i(t) = \lambda_i^2 e^{-2\lambda_i t}$, which weights recent opinions more heavily than older opinions, and $\mathcal{K}(t)$ is the diagonal matrix with entries $\mathcal{K}_{ij}(t)=\delta_{ij}\mathcal{K}_{i}(t)$ that normalizes the kernel $\kappa_i$. Specifically, $\mathcal{K}_i(t)$ and $\tilde{x}_i\{t\}$ are 
\begin{equation} \label{eq: Ki}
  \mathcal{K}_i(t) = \int_0^t \kappa_i(t')\,dt' \,, \quad \tilde{x}_i\{t\}=\frac{\int_0^t \kappa_i(t-t') x_i(t')\,dt'}{\mathcal{K}_i(t)}\,.
\end{equation}
We use the curly-bracket notation $\{t\}$ to indicate that $\tilde{x}_i\{t\}$ depends on entity $i$'s entire opinion trajectory $\{x_i(t')\}_{t'\le t}$.
In comparison to the memoryless model \eqref{eq: exp-xi} that is induced by the exponential WTD, the memory-dependent model \eqref{eq: memory-matrix} includes an exponential time-relaxation kernel $\mathcal{K}_i(t)=\frac{\lambda_i}{2}\left(1-e^{-2\lambda_it}\right)$, which approaches the constant $\lambda_i/2$ as $t \rightarrow \infty$. The model \eqref{eq: memory-matrix} also includes a damping term that drives the opinion of each entity to its historical mean; this promotes self-consistency of each entity's opinion.

\add{If we define the integral term in Equation \eqref{eq: gamma-model} as an auxiliary variable $y_i$, we obtain the Markovian system
\begin{equation} \label{eq: gamma-ext}
\begin{aligned}
    \dot{x}_i(t) &= \sum_j \frac12 \widetilde{A}_{ij}\lambda_i\left(1-e^{-2\lambda_i t}\right)x_j(t) - y_i(t) \,, \\
    \dot{y}_i(t) &= \lambda_i^2 x_i(t) - 2\lambda_i y_i(t)\,,
\end{aligned}
\end{equation}
where the variables in this extended system are $x_i$ and $y_i$ for all $i \in \{1, \ldots, N\}$.
}

In Section \ref{sec: numerics}, we study the non-Markovian opinion model \eqref{eq: memory-matrix} that is induced by the Gamma WTD. We are not aware of any existing opinion models that have the same form as Equation \eqref{eq: memory-matrix}. 

%%%%%
 
\subsubsection{Heavy-tailed WTDs}
Many real systems, such as e-mail communication \cite{iribarren2009impact} and the spread of infectious diseases \cite{vazquez2007impact}, have bursty properties, which cannot be captured well by {Poisson} temporal statistics. 
In such situations, the time intervals between isolated events deviate from an exponential distribution. Instead, they follow a heavy-tailed distribution. 

Suppose that each node follows a Pareto WTD. The Pareto distribution has been used to model online participation inequality, distributions of wealth, website visits, and a variety of other phenomena \cite{wojcik2019sizing}.
We write the Pareto distribution in the form $T_{i}(t) = \lambda_i(t+1)^{-\lambda_i-1}$ with $\lambda_i > 0$, where we have shifted the distribution so that its domain is $[0,\infty)$.
A direct computation yields
\begin{equation}\label{eq: pareto-eq}
    \hat{T}_{i}(s) = \lambda_i\int_0^\infty (t+1)^{-\lambda_i-1}e^{-ts}\,dt \,, \quad \phi_i(t) = (t+1)^{-\lambda_i}\,.
\end{equation}
The opinion model in Equation \eqref{eq: model-discrete} involves the inverse Laplace transform of $\hat{\theta}_i = (1-\hat{T}_i)^{-1}\hat{T}_{i}$. Unfortunately, there is not a convenient formula for the inverse Laplace transform $\hat{T}_i$ in \eqref{eq: pareto-eq}.

Now suppose that the WTD of each node is a log-normal distribution, which is also heavy-tailed and has the PDF
\begin{equation}
    T_{i}(t) = \frac{1}{\sqrt{2\pi}\sigma_i t} \exp\left[ -\frac{(\ln t-\mu_i)^2}{2\sigma_i^2}\right]\,,
\end{equation}
where $\mu_i$ and $\sigma_i^2$, respectively, are the mean and variance of the Gaussian distribution. A closed-form expression does not exist for the Laplace transform of a log-normal distribution \cite{asmussen2016laplace}. Accordingly, we are unable to obtain closed-form expressions for related terms, such as $\hat{T}_i$ and $\hat{\theta}_i$ in \eqref{eq: thetai}, and their inverse Laplace transforms.

To the best of our knowledge, most common heavy-tailed distributions do not possess an explicit form for the opinion models \eqref{eq: model-discrete}. 
Instead of aiming to determine analytical expressions for models that are induced by heavy-tailed WTDs, we seek feasible numerical approaches to simulate opinion models \eqref{eq: model-discrete} that are induced by heavy-tailed WTDs.
\add{Masuda and Rocha \cite{masuda2018gillespie} proposed a fast Gillespie algorithm to simulate non-Poisson renewal processes with heavy-tailed interevent-time distributions. 
Their algorithm simulates agent-based trajectories $X_i(t)$ as interacting sequences of discrete events, but it does not generate the expected opinion value $x_i(t)$.}
In the present paper, our approach is to approximate the continuous-time WTDs in \eqref{eq: model-continuous} with sums of Dirac delta distributions. 
This yields opinion models \eqref{eq: model-discrete} that are induced by a sum of Dirac delta distributions.

%%%%

\subsubsection{WTDs that are sums of Dirac delta distributions} \label{delta-sum}
When a WTD consists of a sum of Dirac delta distributions, the events take place at a set of discrete times. The WTD $T_i$ of node $i$ is
\begin{equation} \label{eq: Ti-eq}
    T_{i}(t) = \sum_{k=1}^{\infty} m_k^i \, \delta(t - k\triangle_i)\,,
\end{equation}
where $m_k^i$, with $\sum_{k=1}^\infty m_k^i = 1$, is the probability that an event occurs after entity $i$ waits for time $k\triangle_i$. A direct computation yields
\begin{equation*} \label{eq: multi-delta-psi}
\begin{aligned}
    \phi_i(t) &= 1 - \sum_{k=1}^{\lfloor {t/\triangle_i} \rfloor} m_k^i\,, \quad  \theta_{i} (t) = \sum_{k = 0
    }^\infty M_k^i \delta(t-k\triangle_i)\,, \\
    M_k^i &= \sum_{\alpha\in U_k} m^i_{\alpha_1}m^i_{\alpha_2}\cdots m^i_{\alpha_z}\,, \quad U_k = \{\alpha \in \mathbb{N}_+^z: ~ \|\alpha\|_1=k, ~ z\in \mathbb{N}_+ \}\,,
\end{aligned}
\end{equation*}
where $\|a\|_1=|a_1|+\cdots+|a_z|$ is the discrete $\ell_1$ norm, $\lfloor \cdot \rfloor$ is the floor function, and $\mathbb{N}_+ = \{ 1, 2, \ldots\}$. When $k = 0$, we define $U_0$ to be the empty set, which implies that $M_0^i=0$.
The first four terms of the sequence $M_k^i$ are
\begin{align*}
    M_0^i &= 0\,, \quad M_1^i = m_1^i \,, \\
    M_2^i &= m_1^im_1^i+m_2^i\,, \quad M_3^i = m_1^im_1^im_1^i+m_1^im_2^i+m_2^im_1^i+m_3^i\,.
\end{align*}
When the node WTDs are sums of Dirac delta distributions, the associated memory-dependent
opinion model is
\begin{equation} \label{eq: multi-delta-x}
    x_i(t) = \sum_j \widetilde{A}_{ij} \sum_{k=1}^{\lfloor{t/\triangle_i}\rfloor}\phi_i(t-k\triangle_i)M_k^ix_j^-(k\triangle_i) + \phi_i(t)x_i(0)\,.
\end{equation}
This model accounts for situations in which each entity updates its opinions at discrete times. 
One can use models of the form \eqref{eq: multi-delta-x} as approximate models that are induced by heavy-tailed WTDs.
When the inverse Laplace transforms of the node WTDs are difficult to compute, we can discretize the WTDs (see Equation \eqref{eq: Ti-eq}) with a small time step $\triangle_i$ and treat \eqref{eq: multi-delta-x} as the resulting model of opinion dynamics.

When the time steps are uniform (i.e., $\triangle_i = \triangle t$ for each $i$), we write \eqref{eq: multi-delta-x} in matrix form
\begin{equation} \label{eq: discrete-fj}
    x[n+1] = \sum_{k=0}^n \Lambda_{\phi}[n-k]\Lambda_{M}[k+1]\widetilde{A}x[k] + \Lambda_{\phi}[n+1]x[0]\,,
\end{equation}
where the bracket $[k]$ denotes evaluation at time $k\triangle t$ (e.g., $x[n]=x(n\triangle t)$) and $\Lambda_\phi$ and $\Lambda_M$ are diagonal matrices with entries
\begin{equation} \label{eq: Lambda_phi_M}
     \left(\Lambda_{\phi}[k]\right)_{ii} = \phi_i(k)\,, \quad \left(\Lambda_{M}[k]\right)_{ii} = M_k^i\,, \quad k \in\mathbb{N}\,.
\end{equation}
In the classical Friedkin--Johnsen (FJ) model \cite{friedkin2011social}, the opinion updates follow the rule $x[n+1] = \eta Px[n] + (1-\eta)x[0]$ with a time-independent weight $(1 - \eta)$ on the nodes' initial opinions. 
We can view the model \eqref{eq: discrete-fj} as an extension of the FJ model by writing $x[n+1]$ as a sum of all previous opinions $x[k]$ (with $k \in \{0, \ldots, n\}$) weighted by time-dependent matrices $P(n+1,k)$. That is,
\begin{equation} \label{eq: multi-delta-homo}
    x[n+1] = \sum_{k=0}^n P(n+1,k)x[k]\,.
\end{equation}
Each time step generates a new opinion vector $x[n]$, which joins the collection $\{x[0], x[1], \ldots, x[n]\}$ of historical opinions that collectively determine $x[n+1]$. The model \eqref{eq: multi-delta-homo} renormalizes the weights of the historical opinions $x[k]$ at each time step and introduces a temporal dependence on the weight matrices $P(n+1,k)$. 
The model \eqref{eq: discrete-fj} is also related to a voter model with an exogenous updating rule \cite{fernandez2011update}, {and this voter model is a special case of \eqref{eq: discrete-fj}}. 
In Section \ref{sec: numerics}, we implement the model \eqref{eq: multi-delta-x} with weights $M_k^i$ from continuous-time distributions.

%%%%%

\subsection{Theoretical analysis} \label{sec: analysis}
We now discuss the properties --- including opinion conservation, convergence, and conditions for consensus --- of the proposed memory-dependent opinion models \eqref{eq: model-discrete} for both homogeneous and heterogeneous scenarios.

%%%%

\subsubsection{Conservation of a weighted average of the opinions in homogeneous models}

When all nodes have the same WTD, the opinion models \eqref{eq: model-discrete} reduce to the homogeneous models \eqref{eq: homo}. 
We first state an opinion-conservation guarantee for the homogeneous models \eqref{eq: homo}.

\begin{theorem} \label{thm: mean}
\add{Let $w$ be a left eigenvector of $\widetilde{A}$ with eigenvalue $1$ (i.e., $w^TA = w^T$). 
Suppose that $x_i(t)$ are solutions of the homogeneous opinion models \eqref{eq: homo}. It then follows that the averaged opinion 
\begin{equation} \label{eq: weight}
    \bar{x}(t) = \sum_i w_i x_i(t)
\end{equation}
is conserved for any WTD.}
\end{theorem}

\begin{proof}
\add{Because $\widetilde{A}$ is row-normalized, it has the eigenvalue $1$ and an associated real left eigenvector $w$. Using Equations \eqref{eq: homo} and \eqref{eq: weight}, we obtain the scalar integral equation}
\begin{equation} \label{eq: barx}
    \bar{x}(t) = \int_0^t \phi(t-t')\psi(t')\bar{x}^-(t')\,dt'\,,
\end{equation}
where $\bar{x}^-(t) = \lim_{t'\rightarrow t_-}\bar{x}(t')$ and $\phi$ and $\psi$ are defined in Equations \eqref{eq: phi-eq} and \eqref{eq: psi-prop}, respectively.
Let $t_\text{max}$ and $t_\text{min}$ be the times that $\bar{x}(t)$ takes its maximum value and minimum value, respectively, in the time interval $[0,T]$. That is,
\begin{equation}
    t_\text{max} = \text{arg max}_{t \in [0,T]} ~\bar{x}(t)\,, \quad t_\text{min} = \text{arg min}_{t \in [0,T]} ~\bar{x}(t)\,.
\end{equation}
Recall that $\phi$ and $\psi$ are positive and that $\phi\star\psi=1$. Direct computations yield
\begin{subequations}\label{bounds}
\begin{align} 
    \bar{x}(t_\text{max}) &\le \bar{x}(t_\text{max})\int_0^{t_\text{max}}  \phi(t_\text{max}-t')\psi(t')\,dt' = \bar{x}(t_\text{max})\,, \label{upper-bounds}\\
    \bar{x}(t_\text{min})  &\ge \bar{x}(t_\text{min})\int_0^{t_\text{min}} \phi(t_\text{min}-t')\psi(t')\,dt' = \bar{x}(t_\text{min})\,. \label{lower-bounds}
\end{align}
\end{subequations}
The inequality in \eqref{upper-bounds} is an equality only if $\bar{x}(t) = \bar{x}(t_\text{max})$ for all $t \in [0,t_\text{max}]$, and the inequality in \eqref{lower-bounds} is an equality only if $\bar{x}(t) = \bar{x}(t_\text{min})$ for all $t \in [0,t_\text{min}]$.
This implies that $\bar{x}(0)=\bar{x}(t_\text{max})=\bar{x}(t_\text{min})$ and hence that $\bar{x}(t)$ is constant on any finite time interval $[0,T]$. The choice of $T$ is arbitrary, so $\bar{x}(t)$ is constant.
\end{proof}

\add{Researchers have previously noted the conservation of weighted averages in models of opinion dynamics~\cite{sood2008voter,masuda2009evolutionary}.
When the row-normalized adjacency matrix $\widetilde{A}$ is also column-normalized, Theorem \ref{thm: mean} implies that the mean opinion $\bar{x}(t) = \sum_i x_i(t)/N$ is conserved (i.e., $\bar{x}(t)$ is constant).}
For an arbitrary heterogeneous model (which can take versatile forms), it is not guaranteed that the mean opinion is conserved. In Figure \ref{fig: x-active-stubborn}, we show an example of a heterogeneous model in which the mean opinion moves towards the opinions of nodes that have larger expected waiting times. In Section \ref{sec: numerics}, we discuss this example in detail.

%%%%%

\subsubsection{Analysis of consensus for homogeneous models}

\add{One common question in models of opinion dynamics is whether or not a model  
converges to a consensus, in which all entities hold the same opinion.} Researchers have successfully determined consensus conditions and the time to reach consensus in several types of models, including the classical DeGroot model \cite{degroot1974reaching}, the FJ model \cite{friedkin1990social}, and BCMs on graphs \cite{lorenz2005stabilization} and hypergraphs \cite{hickok2022bounded,chu2022density}. We analyze the steady-state opinions in the homogeneous memory-dependent models \eqref{eq: homo} and provide sufficient conditions that guarantee convergence to consensus. 

\begin{theorem} \label{thm: consensus}
If the row-normalized adjacency matrix $\widetilde{A}$ is diagonalizable and $-1$ is not an eigenvalue of $\widetilde{A}$, then the homogeneous models \eqref{eq: homo} that are induced by any WTD converge to the same steady state $x^*$. The steady state $x^*$ satisfies
\begin{equation}
    \lim_{t\rightarrow \infty} x(t) = x^* = \!\! \sum_{\{d:~\nu_d=1\}} c_d^0v_d\,,
\end{equation}
where $\{v_d\}$ are the eigenvectors (which we assume are linearly independent) of $\widetilde{A}$ and $\{c_d^0\}$ are the associated coefficients of $x(0)$ in the basis $\{v_d\}$. 
\end{theorem}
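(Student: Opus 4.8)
The plan is to diagonalize $\widetilde{A}$ and decouple the matrix form of \eqref{eq: homo} into one scalar Volterra equation per eigenvalue. Write $\widetilde{A}=\sum_d \nu_d\, v_d w_d^{\top}$ with $w_d^{\top}v_e=\delta_{de}$, and expand $x(0)=\sum_d c_d^0 v_d$. Because \eqref{eq: homo} is linear in $x$ and the only matrix occurring in it is $\widetilde{A}$ (the kernels $\phi$, $\psi$, $\theta$ are scalar), the ansatz $x(t)=\sum_d c_d(t)v_d$ is consistent and, using $\psi=\delta+\theta$ together with \eqref{eq: psi-prop}, each coefficient satisfies
\begin{equation*}
    c_d(t)=\nu_d\bigl[\phi\star(\psi c_d^-)\bigr](t)+(1-\nu_d)\phi(t)c_d^0=\phi(t)c_d^0+\nu_d\bigl[\phi\star(\theta c_d^-)\bigr](t)\,.
\end{equation*}
It therefore suffices to show that $c_d(t)\to c_d^0$ when $\nu_d=1$ and $c_d(t)\to 0$ when $\nu_d\neq1$ (note $|\nu_d|\le1$ because $\widetilde{A}$ is row-stochastic), since summing then gives $x(t)\to\sum_{\{d:\,\nu_d=1\}}c_d^0 v_d=x^*$, a quantity that does not depend on the WTD; diagonalizability is what lets us present the limit in this clean form.

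For $\nu_d=1$ the displayed equation is exactly the scalar integral equation \eqref{eq: barx} (with $c_d$ in place of $\bar x$), so the maximum/minimum argument in the proof of Theorem \ref{thm: mean} applies verbatim and shows that $c_d$ is constant on every finite interval, hence $c_d(t)\equiv c_d^0$. For $\nu_d\neq1$ I would pass to the iterated (renewal) series: setting $G_0:=\phi$ and $G_{m+1}:=\phi\star(\theta G_m)$, one checks that every $G_m\ge0$ and $G_m\le1$, that $S:=\sum_{m\ge0}G_m$ solves the $\nu=1$ equation with $S(0)=1$ and therefore equals the constant $1$ by the case just handled, and that $c_d(t)=c_d^0\sum_{m\ge0}\nu_d^m G_m(t)$ (the series is absolutely convergent since $|\nu_d|\le1$ and $\sum_m G_m(t)=1$), this last identity following from uniqueness for the Volterra equation. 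Probabilistically, $(G_m(t))_{m\ge0}$ is the distribution of the number $M_t$ of jumps needed to trace $X_i(t)$ back to an initial opinion, so $c_d(t)=c_d^0\,\mathbb{E}\bigl[\nu_d^{M_t}\bigr]$. When $|\nu_d|<1$ the convergence $c_d(t)\to0$ is then routine: one proves by induction on $m$ that $G_m(t)\to0$ as $t\to\infty$ — splitting $\phi\star(\theta G_m)$ at a late time $A$ and using $\phi\star\theta=1-\phi\le1$, $\phi(t)\to0$, and $\int_0^A\theta<\infty$ to bound the contribution of $[0,A]$ by $\phi(t-A)\int_0^A\theta$ — while $\sum_{m>K}|\nu_d|^m G_m(t)\le|\nu_d|^K$; letting $t\to\infty$ and then $K\to\infty$ gives $c_d(t)\to0$.

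The substantive case is $|\nu_d|=1$ with $\nu_d\neq1$ — that is, $\nu_d$ a nontrivial root of unity — and this is where the hypothesis that $-1$ is not an eigenvalue is used. Here $\nu_d^{M_t}$ does not converge pointwise, and one must show instead that $\mathbb{E}[\nu_d^{M_t}]\to0$, i.e.\ that the law of $M_t$ equidistributes modulo the order of $\nu_d$ as $t\to\infty$. This rests on two facts: that $M_t\to\infty$ (the inter-event times are almost surely finite, so tracing all the way back to time $0$ within a bounded number of jumps has probability tending to $0$), and that the renewal-type counting distribution of $M_t$ spreads out smoothly enough to wash out every phase $\nu_d^{M_t}$ with $\nu_d\neq1$; excluding $\nu_d=-1$ removes the single periodic obstruction that such smoothing cannot overcome in general. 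I expect this equidistribution step — making it quantitative and identifying precisely which spectral and waiting-time assumptions it requires — to be the principal obstacle, the remaining pieces being short and elementary.
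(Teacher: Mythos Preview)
Your diagonalization and the $\nu_d=1$ case coincide with the paper's argument. For $|\nu_d|<1$ you take a different but equally valid route: the paper treats $F_d[y]=\nu_d\,\phi\star(\psi y^-)+(1-\nu_d)\phi\, c_d^0$ as a strict contraction on $L^\infty$ (using $\phi\star\psi\equiv1$), invokes Banach to get the unique fixed point $c_*$, and then splits the convolution at a late time to obtain $\overline{\lim}_{T}|c_*(T)|\le|\nu_d|\sup_{t'\ge t}|c_*(t')|$, whence $c_*(t)\to0$. Your renewal-series representation $c_d(t)=c_d^0\,\mathbb{E}[\nu_d^{M_t}]$ with $G_m(t)=\mathbb{P}(M_t=m)$ is a pleasant probabilistic repackaging of the same Picard iteration; the uniqueness you invoke is precisely the paper's contraction bound, and both arguments ultimately rest on $\phi\star\psi\equiv1$.

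Your third paragraph is where the proposal breaks down, and the problem is with the hypothesis, not with your strategy. Excluding $\nu_d=-1$ does \emph{not} ``remove the single periodic obstruction that such smoothing cannot overcome'': take $\widetilde{A}$ to be the $3$-cycle permutation matrix, which is diagonalizable with eigenvalues $1,e^{\pm 2\pi i/3}$ and hence satisfies the stated hypotheses, and take the Dirac-delta WTD. Then $M_t=\lfloor t\rfloor$, so $\mathbb{E}[\nu_d^{M_t}]=\nu_d^{\lfloor t\rfloor}$ does not tend to $0$; indeed $x(n)=\widetilde{A}^n x(0)$ simply cycles with period $3$. No equidistribution argument can rescue this. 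The paper's own proof sidesteps the issue by passing directly from ``$|\nu_d|\le1$ and $\nu_d\neq-1$'' to the two cases $\nu_d=1$ and $|\nu_d|<1$, tacitly assuming no other unit-modulus eigenvalues occur; under that tacit assumption (e.g.\ $\widetilde{A}$ primitive, or the underlying graph undirected so that $\widetilde{A}$ is similar to a real symmetric matrix and has real spectrum) both your argument and the paper's are complete and the third paragraph is unnecessary. So rather than trying to push through the equidistribution step, you should note that the hypothesis needs to be strengthened to exclude all unit-modulus eigenvalues other than $1$.
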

\begin{proof}
Let $\{\nu_d\}$ be the eigenvalues of $\widetilde{A}$. Consider the decomposition $x(t)=\sum_d c_d(t)v_d$. From Equation \eqref{eq: homo}, we know that the basis coefficients $c_d(t)$ satisfy
\begin{equation} \label{eq: ci-eq}
    c_d(t) = \nu_d \left[\phi\star\left( \psi c_d^- \right)\right](t) + (1-\nu_d)\phi(t)c_d^0\,, 
\end{equation}
where $\phi$ and $\psi$ are defined in \eqref{eq: phi-eq} and \eqref{eq: psi-prop}, respectively. Because $\widetilde{A}$ is a right stochastic matrix (i.e., its row sums are $1$), the eigenvalues satisfy $|\nu_d|\le1$. 
By assumption, $\nu_d\neq -1$. We discuss the two cases $\nu_d = 1$ and $|\nu_d| < 1$ separately.

For eigenvalues $\nu_d = 1$, we rewrite Equation \eqref{eq: ci-eq} as
\begin{equation}
    c_d = \phi \star \left( \psi c_d^- \right)\,, 
\end{equation}
where the coefficients $c_d$ satisfy the equation for $\bar{x}$ in Equation \eqref{eq: barx}. By Theorem \ref{thm: mean}, we know that $c_d(t)$ remains constant and hence always equals its initial value $c_d^0$. 

For eigenvalues $|\nu_d|<1$, we consider a mapping $F_d$ between {the $L^\infty$-function space that is} equipped with the $L^\infty$ norm $\|y\|_{L^\infty}=\sup_{t\in[0,\infty)} |y(t)|$. For all $y\in L^\infty$, we define the mapping $F_d$ with the equation
\begin{equation}
    F_d[y](t) = \nu_d \left[\phi\star\left( \psi y^- \right)\right](t) + (1-\nu_d)\phi(t)c_d^0\,.
\end{equation}
Because $\phi\star\psi=1$, we have
\begin{equation} \label{eq: Fic}
\begin{aligned}
    \|F_d(y_1-y_2)\|_{L^\infty} &=  \left\|\nu_d \phi\star\left[\psi \left(y_1^--y_2^-\right) \right]\right\|_{L^\infty} \\
    	&\le  |\nu_d| \|y_1-y_2\|_{L^\infty} \| \phi \star \psi\|_{L^\infty} = |\nu_d| \|y_1-y_2\|_{L^\infty}\,,
\end{aligned}
\end{equation}
which implies that $F_d$ is a contraction mapping. According to the Banach fixed-point theorem, there exists a unique fixed point $c_*\in L^\infty$ that satisfies $F_d[c_*] = c_*$. Therefore, $c_*$ is the unique solution of Equation \eqref{eq: ci-eq}. For any fixed $t$, we choose $T > t$ and have
\begin{equation} \label{eq: c*}
    |c_*(T)| \le \|\nu_dc_*\|_{L^{\infty}}\int_0^t \phi(T-t')\psi(t')\,dt' + |\nu_d|\sup_{t'\ge t}|c_*(t')| + |(1-\nu_d)c_d^0|\phi(T)\,.
\end{equation}
We let $T \rightarrow \infty$ in Equation \eqref{eq: c*} to obtain
\begin{equation} \label{eq: supc}
    \overline{\lim}_{T\rightarrow \infty} |c_*(T)|\le |\nu_d|\sup_{t'\ge t}|c_*(t')|\,,
\end{equation}
where we have used the facts that $\phi(t)$ is nonincreasing and $\lim_{t\rightarrow \infty}\phi(t) = 0$. The time $t$ is arbitrary, so we let $t \rightarrow \infty$ in Equation \eqref{eq: supc} and obtain 
\begin{equation} \label{245}
    \overline{\lim}_{T\rightarrow \infty} |c_*(T)|\le |\nu_d|\overline{\lim}_{t\rightarrow \infty} |c_*(t)|\,.
\end{equation}
Because $|\nu_d|<1$, the inequality \eqref{245} holds if and only if $\text{lim}_{t\rightarrow \infty} |c_*(t)|=0$. Therefore, for all $|\nu_d|<1$, there is a unique solution $c_d(t)$ of Equation \eqref{eq: ci-eq} with $c_d(0) = c_d^0$ and $\lim_{t\rightarrow\infty}c_d(t) = 0$.

Combining the cases for $\nu_d = 1$ and $|\nu_d| < 1$, we have
\begin{equation}
    \lim_{t\rightarrow \infty} x(t) = \lim_{t\rightarrow \infty} \left[ \sum_{\{d:~|\nu_d|<1\}} \!\! c_d(t)v_d+\!\! \sum_{\{d:~\nu_d=1\}} \!\! c_d(t)v_d \right] = \sum_{\{d:~\nu_d=1\}} \!\! c_d^0v_d \,.
\end{equation}
\end{proof}

\medskip

\begin{remark} In Theorem \ref{thm: consensus}, it is necessary to include the condition that $\widetilde{A}$ does not have an eigenvalue of $-1$. Consider the $2 \times 2$ matrix 
\begin{equation} \label{example}
	\widetilde{A} = A =  \begin{pmatrix}
		0 & 1 \\ 1 & 0
		\end{pmatrix}\,,
\end{equation}
whose eigenvalues are $1$ and $-1$. The model \eqref{eq: dirac-model} that is induced by the Dirac delta WTD never converges to a steady state if the two opinions are different initially. The two entities swap their opinions whenever an event occurs.
\end{remark}

Using Theorem \ref{thm: consensus}, we find the following sufficient conditions to guarantee that the homogeneous opinion models \eqref{eq: homo} converge to consensus 
\add{(i.e., $\lim_{t\rightarrow \infty} x_i(t)=x_\text{same}$ for all $i$).}

\begin{corollary} \label{thm: irreducible-homo}
If the row-normalized adjacency matrix $\widetilde{A}$ is irreducible and does not have an eigenvalue of $-1$, then all homogeneous models \eqref{eq: homo} converge to consensus regardless of the initial conditions and WTDs.
\end{corollary}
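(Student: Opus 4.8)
The plan is to combine Theorem~\ref{thm: consensus} with the Perron--Frobenius theorem. Theorem~\ref{thm: consensus} already guarantees that, for a diagonalizable $\widetilde{A}$ with no eigenvalue equal to $-1$, every homogeneous model \eqref{eq: homo} converges to $x^{*}=\sum_{\{d:~\nu_d=1\}}c_d^0v_d$, independently of the WTD. So the whole task reduces to understanding the eigenspace of $\widetilde{A}$ for the eigenvalue $1$: consensus is exactly the statement that $x^{*}$ is a scalar multiple of the all-ones vector $\mathbf{1}=(1,\dots,1)^{\top}$.

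First I would observe that $\widetilde{A}$ is nonnegative with unit row sums, so $\widetilde{A}\mathbf{1}=\mathbf{1}$; hence $1$ is an eigenvalue of $\widetilde{A}$ with eigenvector $\mathbf{1}$, and the spectral radius of $\widetilde{A}$ equals $1$. Next, since $\widetilde{A}$ is irreducible, the Perron--Frobenius theorem for irreducible nonnegative matrices says that the spectral radius is a \emph{simple} eigenvalue; therefore the eigenspace associated with $1$ is one-dimensional and is spanned by $\mathbf{1}$. Consequently the index set $\{d:~\nu_d=1\}$ appearing in Theorem~\ref{thm: consensus} is a singleton $\{d_0\}$ with $v_{d_0}$ proportional to $\mathbf{1}$, so $x^{*}=c_{d_0}^0 v_{d_0}$ has all components equal --- i.e., the model reaches consensus. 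Because Theorem~\ref{thm: consensus} delivers this limit for every initial condition and every WTD, the corollary follows.

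The only delicate point is that Corollary~\ref{thm: irreducible-homo} as stated does not repeat the diagonalizability hypothesis of Theorem~\ref{thm: consensus}, so strictly speaking one either inherits that assumption or must dispense with it. If one wants to drop diagonalizability, the contraction-mapping argument from the proof of Theorem~\ref{thm: consensus} has to be redone on Jordan chains of generalized eigenvectors: a block attached to an eigenvalue $\nu$ with $|\nu|<1$ produces a triangular (rather than fully decoupled) system of renewal-type integral equations for the chain coefficients, and one must propagate the $L^{\infty}$-decay up the chain, controlling the forcing from higher levels. I expect this bookkeeping to be the main technical obstacle; the Perron--Frobenius step that actually collapses $x^{*}$ onto $\mathbf{1}$, in contrast, is immediate.
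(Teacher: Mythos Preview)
Your proposal is correct and matches the paper's approach: the paper simply declares that the corollary is a ``direct consequence of Theorem~\ref{thm: consensus},'' and your Perron--Frobenius argument is exactly the natural way to fill in that step. Your observation about the missing diagonalizability hypothesis is apt --- the paper does not address it either, so the corollary as stated implicitly inherits that assumption from Theorem~\ref{thm: consensus}.
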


\begin{corollary} \label{thm: initial_coro}
Suppose that $\widetilde{A}$ is diagonalizable and does not have an eigenvalue of $-1$. If the decomposition of $x(0)$ satisfies $\sum_{\{d:~\nu_d=1\}} c_d^0v_d = c\mathbb{1}$, where $c$ is a scalar and $\mathbb{1}$ is a vector in which each entry is $1$, then the homogeneous models \eqref{eq: homo} converge to consensus with the opinion value $c$ for any WTD.
\end{corollary}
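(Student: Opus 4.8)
The plan is to read this off directly from Theorem~\ref{thm: consensus}. The assumptions on $\widetilde{A}$ in the corollary --- diagonalizable, with $-1$ not an eigenvalue --- are precisely the hypotheses of that theorem, so for any WTD the homogeneous model \eqref{eq: homo} with initial data decomposed as $x(0)=\sum_d c_d^0 v_d$ converges to
\begin{equation*}
    x^* \;=\; \sum_{\{d:\,\nu_d=1\}} c_d^0 v_d\,.
\end{equation*}
By the hypothesis on the decomposition of $x(0)$, the right-hand side equals $c\mathbb{1}$, so $\lim_{t\to\infty}x_i(t)=c$ for every node $i$; this is exactly consensus at the value $c$. The whole proof is therefore a one-line application of Theorem~\ref{thm: consensus} once one observes that $c\mathbb{1}$ is a consensus vector.

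The only points worth remarking on are that the hypothesis is well posed and nonvacuous. Since $\widetilde{A}$ is right stochastic, $\mathbb{1}$ is always an eigenvector with eigenvalue $1$, so $c\mathbb{1}$ indeed lies in the span of $\{v_d:\nu_d=1\}$, i.e., in the eigenspace $E_1=\ker(\widetilde{A}-I)$; hence asking that the $E_1$-component of $x(0)$ be exactly $c\mathbb{1}$ is a genuine (albeit restrictive) condition on $x(0)$. Moreover, because $\widetilde{A}$ is diagonalizable we have the direct-sum decomposition into eigenspaces, and the projection of $x(0)$ onto $E_1$ along the sum of the other eigenspaces --- namely $\sum_{\{d:\,\nu_d=1\}}c_d^0 v_d$ --- does not depend on the choice of eigenbasis, so the statement is basis-independent even when the eigenvalue $1$ is repeated or when $\widetilde{A}$ has complex eigenvalues (which occur in conjugate pairs and so contribute a real complementary subspace).

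Since there is no analytic difficulty here, the main (essentially trivial) obstacle is just the bookkeeping of confirming that the limit furnished by Theorem~\ref{thm: consensus} coincides with the all-equal vector $c\mathbb{1}$. If one preferred a self-contained derivation that does not cite Theorem~\ref{thm: consensus}, one could instead verify directly that $c\mathbb{1}$ solves \eqref{eq: homo} (using $\phi\star\psi=1$ and that the row sums of $\widetilde{A}$ equal $1$), set $y(t)=x(t)-c\mathbb{1}$, note that $y$ obeys the same linear integral equation with its $E_1$-component at $t=0$ equal to zero, and then rerun the contraction and $\overline{\lim}$ argument from the proof of Theorem~\ref{thm: consensus} on the eigenbasis coordinates of $y$ to conclude $y(t)\to0$; but invoking the theorem directly is cleaner.
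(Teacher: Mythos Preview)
Your proof is correct and matches the paper's approach exactly: the paper states that this corollary (together with the two surrounding ones) is a direct consequence of Theorem~\ref{thm: consensus}, which is precisely what you do. Your additional remarks on well-posedness and basis-independence are sound but go beyond what the paper provides.
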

\begin{corollary} \label{thm: degroot-convergence}
If the homogeneous DeGroot model \eqref{eq: dirac-model} (which is induced by the Dirac delta WTD) converges to consensus, then all homogeneous models \eqref{eq: homo} converge to consensus.
\end{corollary}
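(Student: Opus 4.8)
The plan is to reduce the claim to Theorem~\ref{thm: consensus} by recognizing the DeGroot model as one member of the homogeneous family \eqref{eq: homo}. Indeed, the DeGroot iteration \eqref{eq: dirac-model}, $x(n+1)=\widetilde{A}x(n)$, is exactly the homogeneous model induced by the Dirac delta WTD $T(t)=\delta(t-1)$, so ``the DeGroot model converges to consensus'' means (interpreting the hypothesis, as is standard, as a property of the matrix $\widetilde{A}$ rather than of one particular initial condition) that for every initial vector $x(0)$ the iterates $\widetilde{A}^n x(0)$ converge to a multiple of $\mathbb{1}$; since $x(0)\mapsto\widetilde{A}^n x(0)$ is linear, this is equivalent to $\widetilde{A}^n$ converging to a rank-one matrix with column space $\mathrm{span}(\mathbb{1})$.

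First I would read off the spectral consequences of this convergence: $1$ must be a simple eigenvalue of $\widetilde{A}$ with eigenvector $\mathbb{1}$ (a larger $1$-eigenspace or a Jordan block at $1$ would force the limit to have rank $\ge 2$ or to fail to exist), and every other eigenvalue $\nu$ must satisfy $|\nu|<1$, since an eigenvalue of modulus $1$ other than $1$ contributes the non-convergent factor $\nu^n$. In particular $-1$ is not an eigenvalue of $\widetilde{A}$, so the spectral hypothesis of Theorem~\ref{thm: consensus} is automatic. If $\widetilde{A}$ is moreover diagonalizable, we are done immediately: Theorem~\ref{thm: consensus} gives that all homogeneous models \eqref{eq: homo} converge to the common steady state $x^*=\sum_{\{d:\,\nu_d=1\}}c_d^0 v_d$, which here equals $c_1^0\mathbb{1}$, a consensus; since the DeGroot model is one of these models, the hypothesis merely certifies that this common limit is a consensus, so every homogeneous model converges to it. (One may phrase this last step via Corollary~\ref{thm: initial_coro} with $c=c_1^0$.)

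I expect the only real obstacle to be that DeGroot convergence to consensus does \emph{not} force $\widetilde{A}$ to be diagonalizable, since Jordan blocks attached to eigenvalues with $|\nu|<1$ leave no trace in $\lim_n \widetilde{A}^n$. To cover that case I would rerun the proof of Theorem~\ref{thm: consensus} with a Jordan decomposition $\widetilde{A}=SJS^{-1}$ in place of the eigendecomposition. The block of $J$ at the eigenvalue $1$ is $1\times 1$ with eigenvector $\mathbb{1}$, and its coordinate of $x(t)$ solves Equation~\eqref{eq: barx}, hence stays equal to $c_1^0$ by Theorem~\ref{thm: mean}. For a Jordan block $J_\ell=\nu_\ell I+N_\ell$ with $|\nu_\ell|<1$, the corresponding vector $c_\ell(t)$ of generalized coordinates satisfies a coupled Volterra system $c_\ell=J_\ell\,[\phi\star(\psi c_\ell^-)]+(\text{forcing})$; the map $y\mapsto J_\ell\,[\phi\star(\psi y^-)]$ need not be an $L^\infty$-contraction because $\|J_\ell\|$ may exceed $1$, but since $\phi\star\psi=1$ its $m$-fold iterate has operator norm at most $\|J_\ell^m\|$, and $\|J_\ell^m\|\to 0$ because $|\nu_\ell|<1$, so some iterate is a contraction and the Banach fixed-point and decay argument of Theorem~\ref{thm: consensus} applies unchanged to give $c_\ell(t)\to 0$. (Equivalently, a similarity transformation can rescale $N_\ell$ so that $\|J_\ell\|\le|\nu_\ell|+\varepsilon<1$.) Summing over all Jordan blocks yields $x(t)\to c_1^0\mathbb{1}$ for every homogeneous model, i.e., the desired consensus; the bookkeeping for this generalized contraction is the only step that needs real work.
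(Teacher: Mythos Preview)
Your core approach---extract the spectral consequences of DeGroot convergence (the eigenvalue $1$ is simple with eigenvector $\mathbb{1}$, all other eigenvalues have modulus $<1$, hence $-1$ is not an eigenvalue) and then invoke Theorem~\ref{thm: consensus}---is exactly what the paper does. The paper offers no details beyond ``direct consequence of Theorem~\ref{thm: consensus},'' so in the diagonalizable case your write-up is simply a spelled-out version of the intended one-line argument.

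Where you go further is the non-diagonalizable case, and here you are being more careful than the paper. Theorem~\ref{thm: consensus} has diagonalizability as an explicit hypothesis, and the paper's corollary tacitly inherits it; the authors do not attempt to cover Jordan blocks at $|\nu|<1$. Your extension is sound: since the time-convolution operator $K:z\mapsto \phi\star(\psi z^-)$ acts on the $t$-variable and $J_\ell$ on the coordinate index, they commute, giving $\|(J_\ell K)^m\|\le \|J_\ell^m\|\,\|K\|^m\le \|J_\ell^m\|\to 0$, so some iterate is a contraction; the rescaling trick $D^{-1}J_\ell D$ with $\|D^{-1}J_\ell D\|\le|\nu_\ell|+\varepsilon<1$ is the cleaner route and lets the scalar decay estimate \eqref{eq: c*}--\eqref{245} carry over verbatim in the rescaled norm. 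So your proposal is correct and, on this point, strictly more general than what the paper proves; just be aware that the paper itself does not claim or need this generality and simply reads the corollary under the standing diagonalizability assumption of Theorem~\ref{thm: consensus}.
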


These three corollaries are direct consequences of Theorem \ref{thm: consensus}. 
In Corollary \ref{thm: degroot-convergence}, the convergence to consensus of any particular homogeneous model \eqref{eq: homo} other than the DeGroot model does not imply that the DeGroot model also converges to consensus. For example, consider the adjacency matrix in \eqref{example}. The model \eqref{eq: exp-model} that is induced by the exponential WTD converges to consensus for any initial state, but the DeGroot model never converges if the initial opinions are different. In numerical computations, we observe for the adjacency matrix in \eqref{example} that the opinion models that are induced by the uniform WTD, the gamma WTD, and heavy-tailed WTDs also converge to consensus. For continuous WTDs, the events of two entities occur simultaneously with probability $0$. With probability $1$, the event of one entity occurs first, which causes an opinion adoption by the other entity and ultimately leads to consensus.

%%%%%

\subsubsection{Analysis of consensus for heterogeneous models with Poisson statistics}

When the WTDs are exponential, the interevent times arise from Poisson point processes. In the following theorem, we state a convergence condition for this situation.

\begin{theorem} \label{thm: consensus-heter}
Suppose that all nodes have exponential WTDs, which we parameterize by the rate parameter $\lambda_i > 0$ for node $i$. 
Let $\Lambda$ be the diagonal matrix with entries $\Lambda_{ii} = \lambda_i$, and let $\widetilde{A}$ be a row-normalized adjacency matrix. If the matrix $Z = \Lambda(\widetilde{A} - I)$ is diagonalizable with eigenvalue--eigenvector pairs $\{\nu_d, v_d\}_{d=1,\ldots,N}$ and $\{v_d\}$ are linearly independent, then the model \eqref{eq: exp-xi} converges to a steady state $x^*$. Additionally, $x^*$ satisfies
\begin{equation}\label{eq: steady-state-heter}
    \lim_{t\rightarrow \infty} x(t) = x^* = \!\! \sum_{\{d:~\nu_d=0\}} c_d^0v_d\,,
\end{equation}
where $c_d^0$ is the coefficient of $v_d$ in the decomposition of $x(0)$ in terms of the basis $\{v_d\}_{d=1,\ldots,N}$. 
\end{theorem}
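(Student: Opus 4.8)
The plan is to work directly from the closed-form solution $x(t)=e^{Zt}x(0)$ with $Z=\Lambda(\widetilde{A}-I)$ given in Equation \eqref{eq: exp-model}, and to reduce everything to locating the spectrum of $Z$. The first step is to observe that $Z$ inherits the structure of the infinitesimal generator of a continuous-time Markov chain: because $\widetilde{A}$ is row-stochastic with nonnegative entries, the off-diagonal entries $Z_{ij}=\lambda_i\widetilde{A}_{ij}$ are nonnegative, the diagonal entries $Z_{ii}=-\lambda_i(1-\widetilde{A}_{ii})$ are nonpositive, and $Z\mathbb{1}=\Lambda(\widetilde{A}-I)\mathbb{1}=0$, so the row sums vanish. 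The point worth emphasizing is that left-multiplication by the \emph{positive} diagonal matrix $\Lambda$ --- i.e., allowing heterogeneous rates $\lambda_i$ --- does not destroy this generator structure, which is exactly what lets an argument in the spirit of Theorem \ref{thm: consensus} go through in the heterogeneous Poisson setting.

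The central step is to pin down which eigenvalues of $Z$ can lie on the imaginary axis. Applying Gershgorin's circle theorem to $Z$, the $i$-th disc is centered at $Z_{ii}\le 0$ with radius $\sum_{j\neq i}|Z_{ij}|=\lambda_i(1-\widetilde{A}_{ii})=-Z_{ii}$; hence each disc lies in the closed left half-plane and intersects the imaginary axis only at the origin. Consequently every eigenvalue $\nu_d$ of $Z$ satisfies either $\mathrm{Re}(\nu_d)<0$ or $\nu_d=0$. (Equivalently, one may note that $Z+cI$ with $c=\max_i\lambda_i(1-\widetilde{A}_{ii})$ is entrywise nonnegative, so $e^{Zt}$ is row-stochastic with spectral radius $1$; but the Gershgorin estimate is what directly excludes nonzero purely imaginary eigenvalues, which is the delicate case for convergence.)

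With the spectrum located, the remainder is assembly. Decomposing $x(0)=\sum_d c_d^0 v_d$ in the eigenbasis (over $\mathbb{C}$, with complex-conjugate pairs recombining so that the trajectory stays real) and using diagonalizability of $Z$, we get $x(t)=\sum_d c_d^0 e^{\nu_d t}v_d$. For indices with $\mathrm{Re}(\nu_d)<0$ the term decays to $0$ as $t\to\infty$; for indices with $\nu_d=0$ the term is the constant $c_d^0 v_d$. Hence $x(t)\to\sum_{\{d:~\nu_d=0\}}c_d^0 v_d=x^*$, which is precisely the projection of $x(0)$ onto $\ker Z$ along the span of the remaining eigenvectors, establishing Equation \eqref{eq: steady-state-heter}.

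I expect the main obstacle to be exactly the Gershgorin step: ruling out a nonzero eigenvalue on the imaginary axis, which would produce undamped oscillations and preclude convergence to any steady state. Two simplifications are worth recording. First, diagonalizability is assumed, so no separate argument is needed to exclude a nontrivial Jordan block at the zero eigenvalue. Second, irreducibility is not required, because the limit is characterized intrinsically through $\ker Z$ rather than as a single consensus vector; the theorem asserts convergence but not consensus in general, and an additional structural hypothesis on $\widetilde{A}$ (e.g., irreducibility) would be what promotes the limit to a consensus, in analogy with Corollary \ref{thm: irreducible-homo}.
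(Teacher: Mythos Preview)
Your argument is correct and structurally identical to the paper's: both start from $x(t)=e^{Zt}x(0)$, diagonalize, show that the spectrum of $Z$ lies in the closed left half-plane with the imaginary axis met only at the origin, and read off the limit as the projection onto $\ker Z$. The only substantive difference is in the spectrum-localization step. The paper invokes an eigenvalue inequality for products from \cite{meyer2000matrix} to bound $\mathrm{eig}_{\max}(Z)\le \mathrm{eig}_{\max}(\Lambda)\,\mathrm{eig}_{\max}(\widetilde{A}-I)=0$, whereas you use Gershgorin discs directly on $Z$. Your route is more self-contained and, notably, it makes explicit why no \emph{nonzero} purely imaginary eigenvalue can occur (each disc, centered at $Z_{ii}\le 0$ with radius $-Z_{ii}$, touches the imaginary axis only at $0$); the paper's inequality, as written, addresses only the real ordering and does not separately discuss the complex case. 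Conversely, the paper's formulation highlights the connection to the graph Laplacian $I-\widetilde{A}$ and its smallest eigenvalue, which is useful context for the subsequent corollary on irreducibility.
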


\begin{proof}
The solution of Equation \eqref{eq: exp-xi} satisfies $x(t) = e^{Zt}x(0)$. If we express $x(t)$ using the basis $\{v_d\}$, then the coefficients $c_d(t)$ satisfy $c_d(t) = e^{\nu_dt}c_d^0$. Let $\text{eig}_\text{max}(\mathcal{M})$ denote the maximum eigenvalue of a matrix $\mathcal{M}$.
For all eigenvalues $\nu_d$, we have
\begin{equation}
    \nu_d \le \text{eig}_\text{max}(Z) = \text{eig}_\text{max}\left(\Lambda(\widetilde{A}-I)\right) \le \text{eig}_\text{max}(\Lambda) \, \text{eig}_\text{max}(\widetilde{A}-I) \,.
\end{equation} 
By the Gershgorin circle theorem, the maximum eigenvalue of $\widetilde{A} - I$ is less than or equal to $0$, which implies that $\nu_d \le 0$. Consequently, the coefficients $c_d(t) = e^{\nu_dt}c_d^0$ satisfy $\lim_{t\rightarrow \infty} c_d(t) = 0$ for $\nu_d < 0$ and $c_d(t) = c_d^0$ for $\nu_d = 0$. This concludes the proof. 
\end{proof}

Theorem \ref{thm: consensus-heter} gives a convergence condition for a heterogeneous model \eqref{eq: exp-xi} that is induced by exponential WTDs. The matrix $Z$ and the initial condition together determine if a model converges to a consensus state. However, the rate parameter $\lambda_i$ of the exponential WTD affects the speed of convergence; a larger $\lambda_i$ results in faster convergence. Using the same notation as in Theorem \ref{thm: consensus-heter}, the following corollary guarantees convergence to consensus.

\begin{corollary} \label{this-cor}
If $Z$ is diagonalizable and irreducible, then the model \eqref{eq: exp-xi} converges to consensus.
\end{corollary}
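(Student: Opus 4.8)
The plan is to derive this corollary from Theorem~\ref{thm: consensus-heter} together with a Perron--Frobenius argument that pins down the $0$-eigenspace of $Z$. Since $Z$ is diagonalizable by hypothesis, Theorem~\ref{thm: consensus-heter} already gives $\lim_{t\to\infty}x(t)=x^*=\sum_{\{d:~\nu_d=0\}}c_d^0 v_d$, so it remains only to show that this sum reduces to a single term whose eigenvector is proportional to $\mathbb{1}$; once that is done, $x^*$ is a scalar multiple of $\mathbb{1}$, which is exactly a consensus state.

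First I would record the obvious eigenvector: because $\widetilde{A}$ is row-stochastic, $\widetilde{A}\mathbb{1}=\mathbb{1}$ and hence $Z\mathbb{1}=\Lambda(\widetilde{A}-I)\mathbb{1}=0$, so $\mathbb{1}$ is an eigenvector of $Z$ for the eigenvalue $0$. The heart of the argument is to show that this $0$-eigenspace is one-dimensional. For that I would pass to $B:=\beta I+Z$ with a constant $\beta>0$ chosen large enough that every diagonal entry $\beta-\lambda_i(1-\widetilde{A}_{ii})$ is nonnegative (for instance $\beta=\max_i\lambda_i$). Then $B$ is entrywise nonnegative; its off-diagonal entries $B_{ij}=\lambda_i\widetilde{A}_{ij}$ have exactly the same zero pattern as those of $Z$ (each $\lambda_i>0$), so $B$ is irreducible because $Z$ is; and $B\mathbb{1}=\beta\mathbb{1}$ with every row sum of $B$ equal to $\beta$, so $\rho(B)=\beta$ with strictly positive eigenvector $\mathbb{1}$. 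By the Perron--Frobenius theorem for irreducible nonnegative matrices, $\rho(B)=\beta$ is an algebraically simple eigenvalue whose eigenspace is $\mathrm{span}\{\mathbb{1}\}$; translating back through $Z=B-\beta I$, the eigenvalue $0$ of $Z$ is simple with eigenspace $\mathrm{span}\{\mathbb{1}\}$.

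Finally, since $\{v_d\}$ is a basis of eigenvectors of $Z$ and $0$ is a simple eigenvalue, exactly one index $d_0$ satisfies $\nu_{d_0}=0$, and $v_{d_0}$ is a scalar multiple of $\mathbb{1}$. Hence $x^*=c_{d_0}^0 v_{d_0}$ is a multiple of $\mathbb{1}$, so the model~\eqref{eq: exp-xi} converges to consensus.

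The only nonroutine step is the assertion that irreducibility of $Z$ makes $0$ a simple eigenvalue; the rest is bookkeeping on top of Theorem~\ref{thm: consensus-heter}. Two small points warrant care: (i) irreducibility depends only on the off-diagonal support, so rescaling rows by the positive constants $\lambda_i$ and adding $\beta I$ does not change it; and (ii) one must invoke the version of the Perron--Frobenius theorem for irreducible (not necessarily primitive) nonnegative matrices, which is precisely what delivers simplicity of $\rho(B)$ and uniqueness (up to scaling) of the positive eigenvector $\mathbb{1}$. Equivalently, one could observe that $-Z=\Lambda(I-\widetilde{A})$ is an irreducible singular M-matrix, whose zero eigenvalue is classically known to be simple; the shift-and-Perron--Frobenius route is just a self-contained way to reach the same conclusion.
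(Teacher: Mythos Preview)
Your proof is correct and follows the same route as the paper, which simply states that the corollary is a direct consequence of Theorem~\ref{thm: consensus-heter} together with the fact that irreducibility forces $Z$ to have a single $0$ eigenvalue. You have supplied the Perron--Frobenius details behind that one-line assertion, which the paper leaves implicit.
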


\add{Corollary \ref{this-cor} is a direct consequence of Theorem \ref{thm: consensus-heter}. Because $\widetilde{A}$ is row-normalized, $Z = \Lambda(\widetilde{A} - I)$ has the eigenvalue $0$ with the associated eigenvector $\mathbb{1}$. 
The irreduciblity condition in Corollary \ref{this-cor} guarantees that $\mathbb{1}$ is the only eigenvector of the eigenvalue $0$ and guarantees convergence to consensus.
}

There are other sufficient conditions that guarantee convergence to consensus. For example, requiring the initial opinion to have a decomposition of the form in Corollary \ref{thm: initial_coro} also guarantees convergence to consensus for the heterogeneous model \eqref{eq: exp-xi}.

%%%%%%

%
\section{Numerical computations} \label{sec: numerics}
In this section, we numerically investigate the time evolution of the opinion models that we proposed in Section \ref{sec: models}. We study how WTDs affect opinions dynamics by examining steady-state opinion clusters and the time to converge to a steady state for both homogeneous and heterogeneous models on a variety of graphs. 

We compare the opinion models \eqref{eq: model-discrete} with different WTDs on three types of graphs. The first graph is the largest connected component of the Caltech network from the {\sc Facebook100} data set \cite{traud2012social,red2011comparing}. The nodes are individuals and the edges encode Facebook ``friendships'' between those individuals on one day in fall 2005. 
The second type of network is a graph that we generate using the stochastic block model (SBM) $G(N,s,p,q)$, where $N$ denotes the number of nodes. We assign nodes uniformly at random to one of $s$ communities. We place edges between nodes in the same community with homogeneous and independent probability $p$, and we place edges between nodes in different communities with homogeneous and independent probability $q$. 
We run our simulations on only one SBM graph, but we expect to obtain similar results on other graphs that are generated by the same SBM.
The third type of network is a complete weighted graph with random edge weights, which we draw independently from the uniform distribution on $[0,1]$. 
We generate 100 graphs from this network ensemble and use them in the study of the heterogeneous models \eqref{eq: model-discrete}. 
In Figure \ref{fig: graphs}, we show the sparsity patterns of the adjacency matrices of these graphs.

\begin{figure}[htp]
    \centering
    \includegraphics[width=0.9\textwidth]{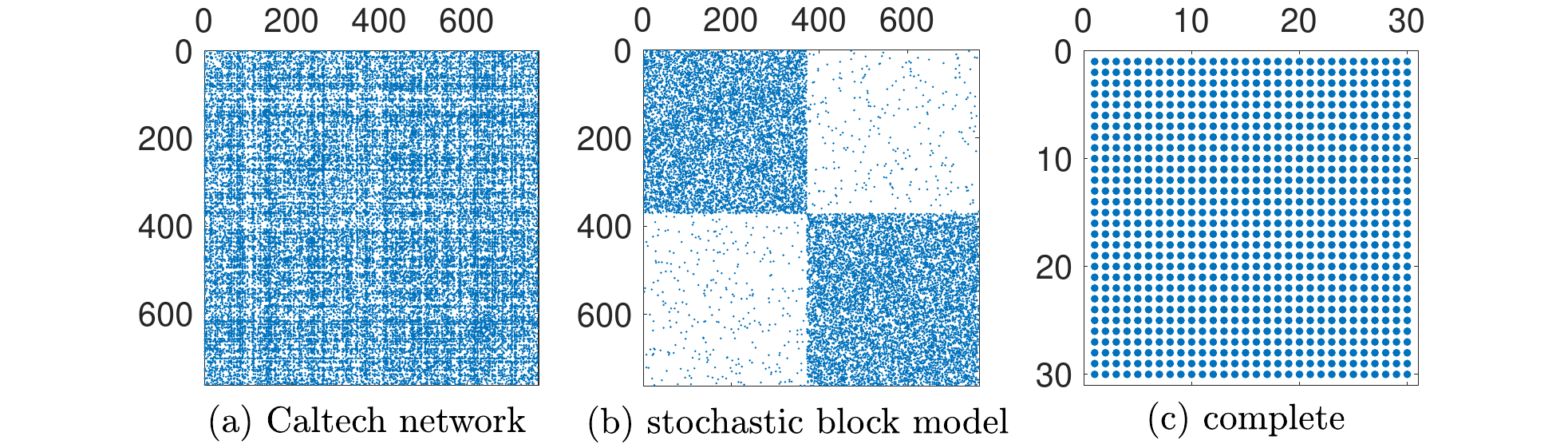}
    \caption{Sparsity patterns of the adjacency matrices of three graphs. Each blue dot signifies a nonzero entry and each white dot signifies a $0$ entry. (a) The Caltech network has $N = 762$ nodes and 16,651 edges. (b) In our stochastic-block-model network, there are $N = 762$ nodes, $s = 2$ communities, an edge probability of $p \approx 0.0554$ within communities, and an edge probability of $q = 0.002$ between communities. With these probabilities, the expected total number of edges matches the number of edges in the Caltech network. (c) A complete weighted graph with $N = 30$ nodes and weights that we draw independently and uniformly from the interval $[0,1]$. 
    }
    \label{fig: graphs}
\end{figure}

\add{We consider six different WTDs --- the Dirac delta, exponential, gamma, uniform, Pareto, and log-normal distributions. The WTDs are given by the formulas
\begin{subequations} \label{eq: Ts}
\begin{align} 
    T_\text{delta}(t) &= \delta(t - \mu)\,, \label{eq: Tdelta}\\
    T_\text{uniform}(t) &= \mathbb{1}_{[0,2\mu]}(t)\,,  \label{eq: Tuniform}\\
    T_\text{gamma}(t) &= \frac{4t}{\mu^2}\exp(-2t/\mu)\,, \label{eq: Tgamma}\\
     T_\text{exp}(t) &= \frac{1}{\mu}\exp(-t/\mu) \,, \label{eq: Texp}\\
    T_\text{LN}(t) &= \frac{1}{\sqrt{2\pi}t\sigma}\exp\left(-\frac{\left(\ln(t)-\mu\right)^2}{2\sigma^2}\right) \,, \label{eq: TLN} \\ 
    T_\text{pareto}(t) &= \frac{\alpha}{(1 + t)^{\alpha + 1}}  \,. \label{eq: Tpareto}
\end{align}
\end{subequations}
The mean of the WTDs in Equations \eqref{eq: Tdelta}--\eqref{eq: Texp} is equal to $\mu$. 
The mean of the log-normal WTD in \eqref{eq: TLN} is $\exp(\mu+\frac{\sigma^2}{2})$, where the parameters $\mu$ and $\sigma$ are the mean and  standard deviation of a normal distribution. 
The mean of the Pareto WTD in \eqref{eq: Tpareto} is $1/(\alpha-1)$, where the parameter $\alpha > 1$ is called the Pareto index.}
The Dirac delta and uniform distributions are compactly supported, the gamma and exponential distributions have light tails (specifically, they decay exponentially), and the log-normal and Pareto distributions have heavy tails.
We discretize the above WTDs (except for the Dirac delta distribution, which is already discrete) using uniform grids (with a spacing of $0.01$ between grid points) and approximate them by a sum of Dirac delta distributions (see Equation \eqref{eq: Ti-eq}). 

%%%%%

\subsection{Models of opinion dynamics with homogeneous WTDs}
\label{sec: numeric-homo}

In this subsection, we investigate the time evolution of opinion models \eqref{eq: homo} with homogeneous WTDs on the Caltech network and the SBM network. 
These two networks have the same number of nodes and a similar number of edges, but they have different community structures. 
In the WTDs in \eqref{eq: Ts}, we use the following parameter values: $\mu = 1$ in \eqref{eq: Tdelta}--\eqref{eq: Texp}, $\mu = -1$ and $\sigma = \sqrt{2}$ in \eqref{eq: TLN}, and $\alpha = 2$ in \eqref{eq: Tpareto}\,. With these choices, these WTDs all have the same mean, which is equal to $1$.

We first examine the homogeneous opinion models \eqref{eq: homo} on the Caltech network. Because the row-normalized adjacency matrix $\widetilde{A}$ is irreducible and does not have an eigenvalue of $-1$, by Corollary \ref{thm: irreducible-homo}, we expect the opinions to converge to consensus. In Figure \ref{fig: x-caltech}, we show the opinion trajectory $x_i(t)$ of each entity $i$ for each WTD. We also show the time-dependent basis coefficients $c_d(t)$ that we obtain by expressing the time-dependent opinion vector $x(t)$ in terms of the eigenvectors $v_d$ of $\widetilde{A}$.

\begin{figure}[htbp]
    \centering
    \includegraphics[height=0.45\textwidth]{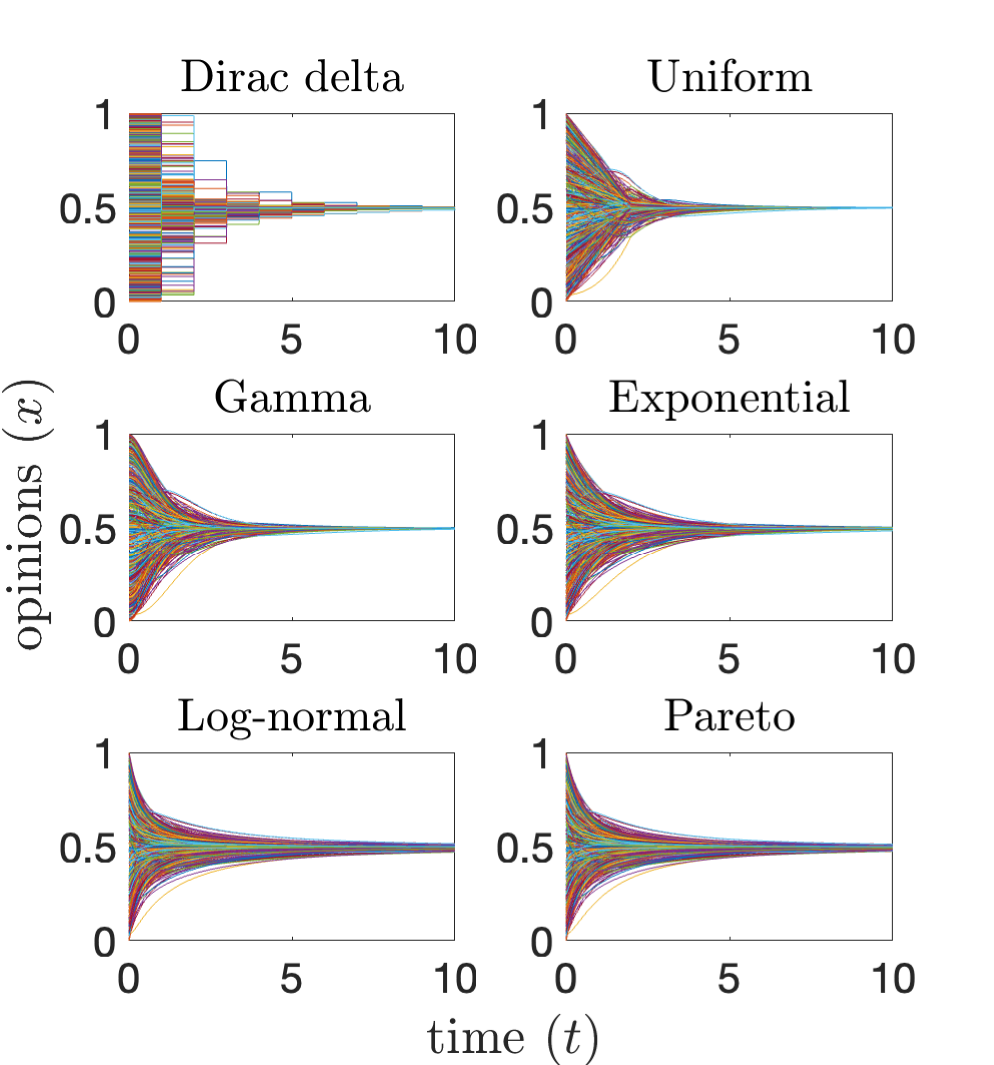}
    \includegraphics[height=0.45\textwidth]{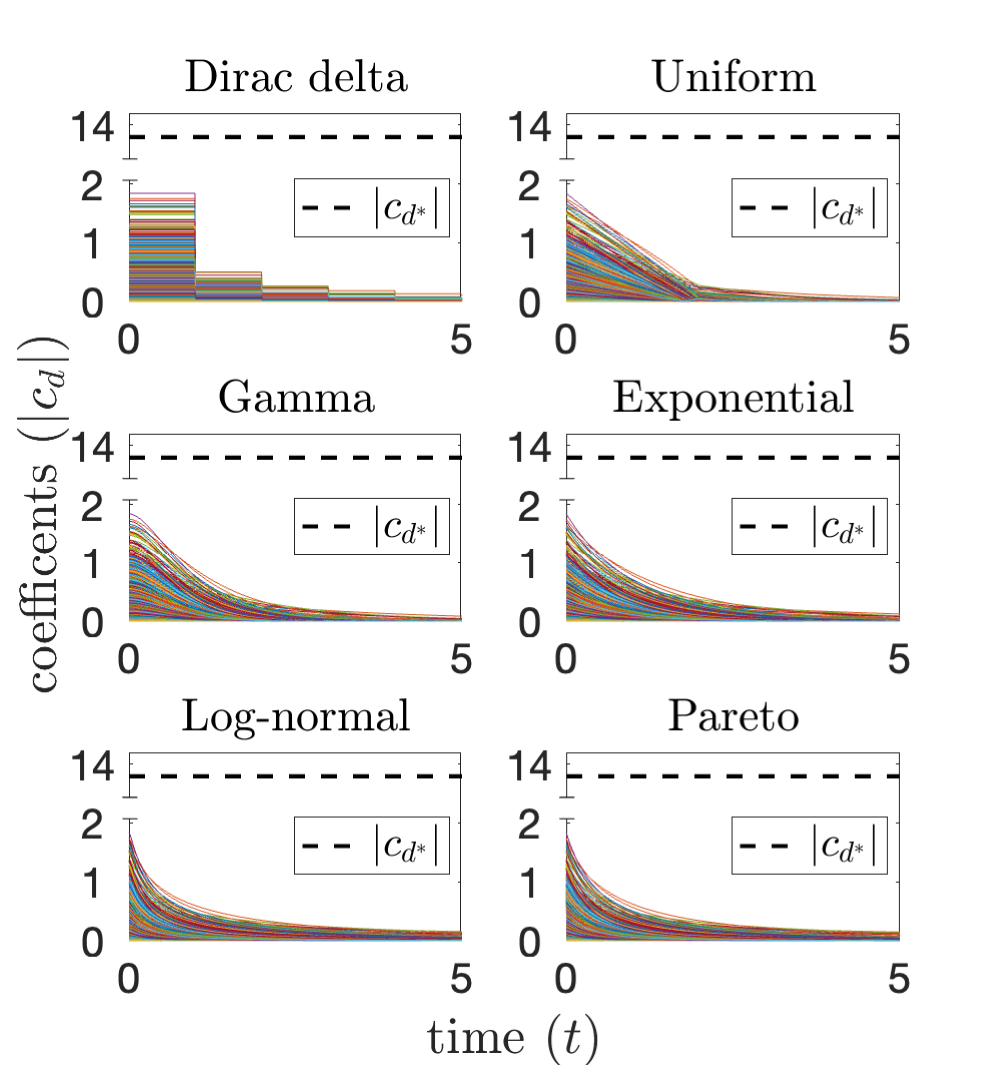}
  \caption{(Color online) Opinion trajectories $x_i(t)$ and their associated basis coefficients $c_d(t)$ for the homogeneous opinion models \eqref{eq: homo} with different WTDs on the Caltech network. All of the models have the same initial opinion, which we draw randomly from the uniform distribution on $[0,1]$. In the right panels, we plot the magnitudes $|c_d(t)|$ of the basis coefficients as a function of time. We use solid colored curves for the coefficients that are associated with eigenvalues that are smaller than $1$, and we use the dashed black lines (which we label with $|c_{d^*}|$) to plot the coefficients that are associated with the leading eigenvalue $\nu_{d^*} = 1$. We observe that $|c_{d^*}(t)| \approx 13.79$ for each of the WTDs.}
    \label{fig: x-caltech}
\end{figure}

As expected, the opinions of all entities converge to a single opinion cluster for each type of WTD. We also observe that the coefficient $c_{d^*}(t)$ that is associated with the eigenvalue $\nu_{d^*} = 1$ is constant with respect to time and that the magnitudes $|c_d(t)|$ of the other coefficients decay to $0$ for all WTDs. For different WTDs, the coefficient magnitudes $|c_d(t)|$ have different dynamics as they decay to $0$. For example, the coefficient magnitudes $|c_d(t)|$ for the uniform WTD $T_\text{uniform}(t) = \mathbb{1}_{[0,2]}(t)$ decay linearly at first. 
Recall the decomposition in \eqref{eq: ci-eq} in Theorem \ref{thm: consensus}. The basis coefficients $c_d(t)$ satisfy the bound
\begin{equation} 
    |c_d(t)| \le |\nu_d| \|c_d\|_{L^\infty} + \phi(t)|(1-\nu_d)c_d^0|\,.
\end{equation}
When $T$ is a uniform PDF, we compute from \eqref{eq: phi-eq} that $\phi(t)=1 - t/2$ for $t\in[0,2]$ and $\phi(t)=0$ otherwise. This formula for $\phi$ explains the associated linear trend for the coefficients $c_d(t)$ in Figure \ref{fig: x-caltech}. 

We generate a single two-community SBM graph (see Figure \ref{fig: graphs}(b)) and examine the homogeneous opinion models \eqref{eq: homo} on that SBM graph. We compute the eigenvalues of the row-normalized adjacency matrix $\widetilde{A}$. Despite the two-community structure of this network, $\widetilde{A}$ has only one eigenvalue that is equal to $1$. As guaranteed by Theorem \ref{thm: consensus}, the opinions converge to consensus for all initial opinions and all WTDs. In Figure \ref{fig: x-sbm}, we show the opinion trajectories $x_i(t)$ and their associated coefficients $c_d(t)$. For each WTD, the opinions converge to consensus and the magnitudes $|c_d(t)|$ of the coefficients decay to $0$ when $\nu_d \neq 1$.

\begin{figure}[htbp]
    \centering
    \includegraphics[height=0.45\textwidth]{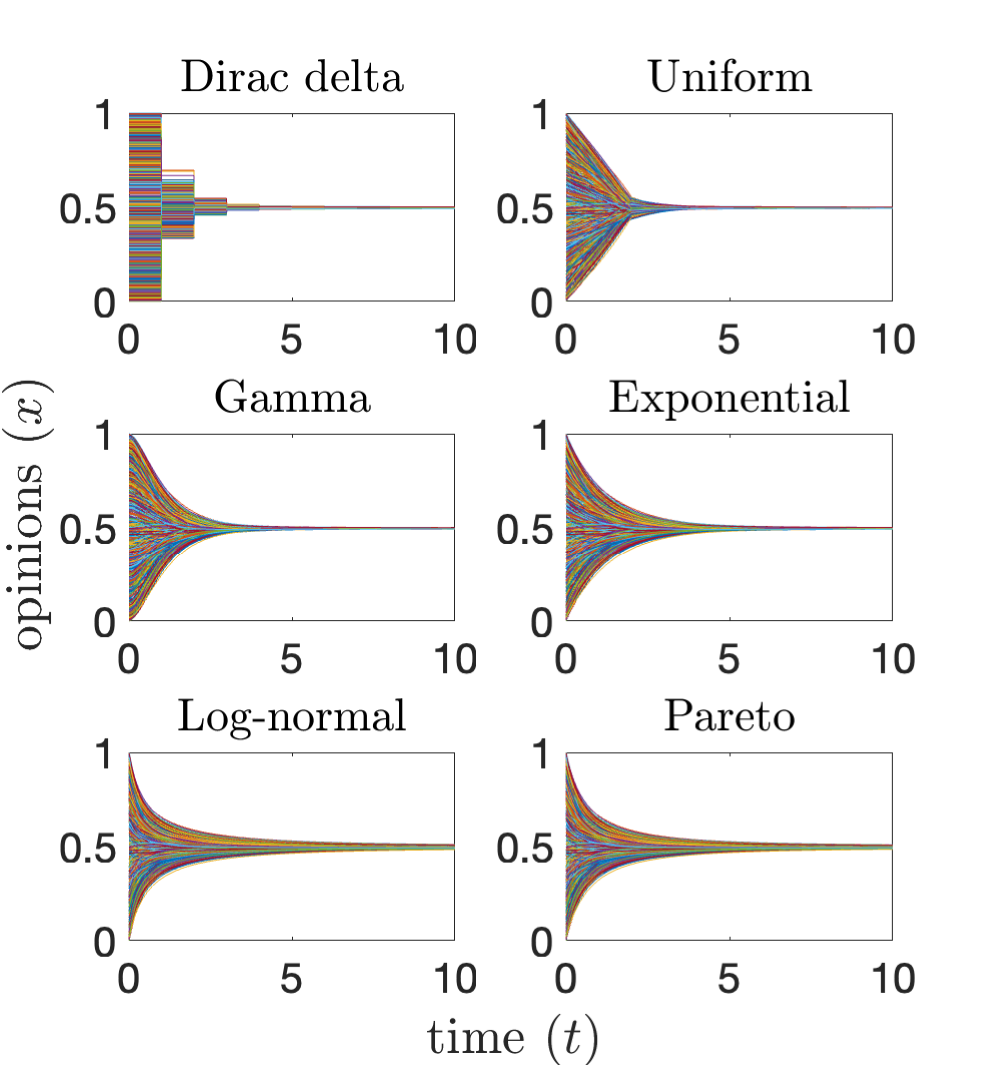}
    \includegraphics[height=0.45\textwidth]{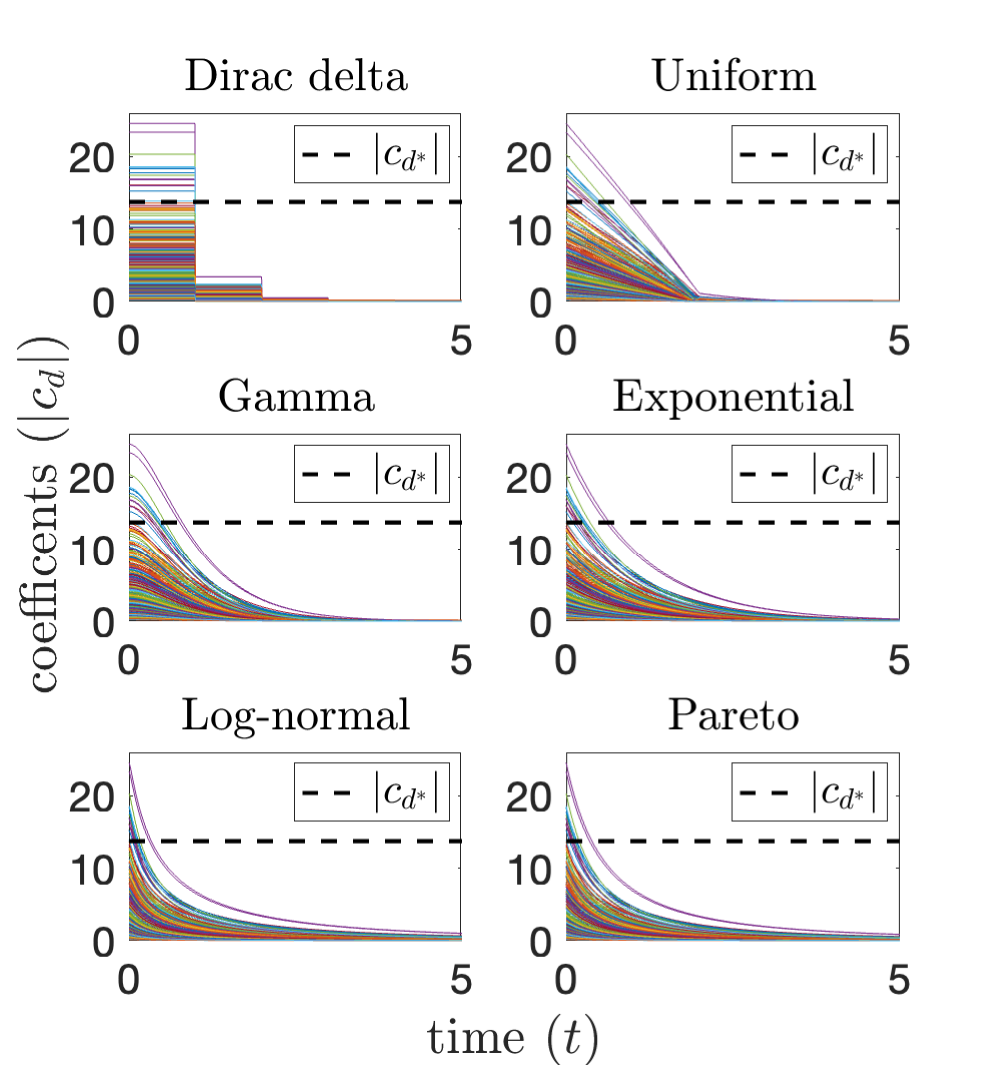}
    \caption{(Color online) Opinion trajectories $x_i(t)$ and their associated basis coefficients $c_d(t)$ for the homogeneous opinion models \eqref{eq: homo} with different WTDs on the same two-community SBM network. All of the models have the same initial opinion, which we draw randomly from the uniform distribution on $[0,1]$. 
    In the right panels, we plot the magnitudes $|c_d(t)|$ of the basis coefficients as a function of time.
    We use solid colored curves for the coefficients that are associated with eigenvalues that are smaller than $1$, and we use the black dashed lines (which we label with $|c_{d^*}|$) to plot the coefficients that are associated with the leading eigenvalue $\nu_{d^*} = 1$. 
    We observe that $|c_{d^*}(t)| \approx 13.42$ for each of the WTDs.
    }
    \label{fig: x-sbm}
\end{figure}

In the above simulations, we observe that the homogeneous opinion models \eqref{eq: homo} that are induced by any WTD converge to consensus for both the Caltech and SBM networks, but different models have different convergence rates. 
To compare the convergence rates of the homogeneous opinion models \eqref{eq: homo} that are induced by our different WTDs, we plot time-dependent variances of the opinions in Figure \ref{fig: convergence}. 
For both networks, the variance decays exponentially with time and the opinion models that are induced by the heavy-tailed WTDs (i.e., the log-normal and Pareto distributions) converge the slowest to steady state; the models that are induced by the uniform and gamma distributions converge the fastest.

\begin{figure}[htbp] 
    \centering
    \includegraphics[height=0.25\textwidth]{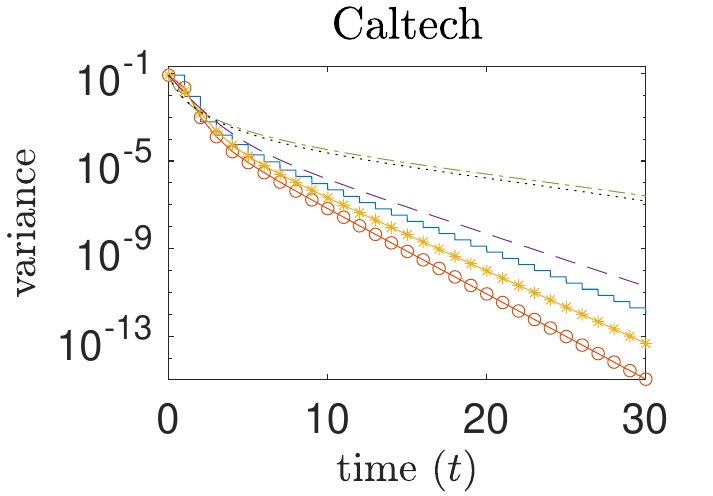}
    \includegraphics[height=0.25\textwidth]{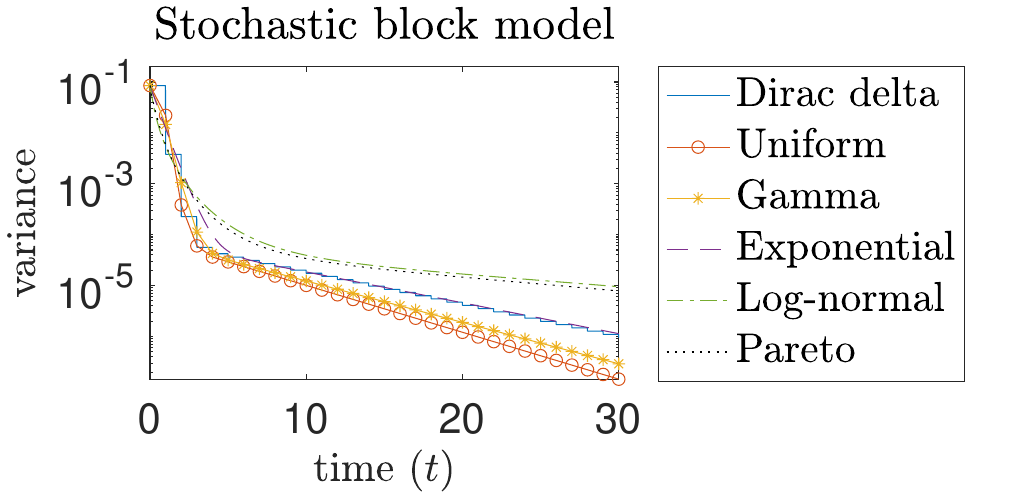}
    \caption{Time-dependent variances of the node opinions for opinion models with homogeneous WTDs on (left) the Caltech network and (right) a two-community SBM network.
    } 
    \label{fig: convergence}
\end{figure}

In both networks, the row-normalized adjacency matrices $\widetilde{A}$ have the same largest eigenvalue of $1$. Because the coefficient $c_{d^*}$ that is associated with the eigenvalue $1$ is constant as a function of time, the second-largest eigenvalue of $\widetilde{A}$ determines the convergence rate. 
In the Caltech network, the second-largest eigenvalue of $\widetilde{A}$ is $0.7229$. In the SBM network, the second-largest eigenvalue is $0.9299$, which is closer to $1$ and hence leads to a slower convergence than in the Caltech network (see Figure \ref{fig: convergence}). The second-largest eigenvalue of $\widetilde{A}$ is related to the Fiedler value of the adjacency matrix $A$. The Fiedler value has a strong influence on the time that it takes for random walks and diffusion processes on a network to converge to a steady state~\cite{masuda2016}.
\add{In simulations on SBM networks, we observe a scale separation of the variance in Figure \ref{fig: convergence}. This may depend both on network community structure and on the WTD.}

%%%%%%%

\subsection{Models of opinion dynamics with heterogeneous WTDs}
\label{sec: numeric-heter}

In this subsection, we discuss the effect of WTD heterogeneity on the memory-dependent opinion models \eqref{eq: model-discrete}. 
We consider (1) an example in which all nodes have the same WTD type but different WTD mean values {and} (2) an example in which different nodes have different types of WTDs. 

In our first example, we consider a heterogeneous exponential model \eqref{eq: exp-xi} in which the nodes have exponential WTDs with different values of $\mu$ in \eqref{eq: Texp}. We examine a scenario in which $90$\% of the nodes have a WTD with a mean of $\mu = 1$ and the remaining $10$\% of the nodes have a mean $\mu$ that we vary. 
This ``90-10 decomposition'', which also has been used in a BCM with heterogeneous node-activity levels \cite{li2022bounded}, is motivated by the so-called ``90-9-1 rule'' of participation inequality.
The 90-9-1 heuristic was proposed as a rule of thumb \cite{nielsen200690} that $90$\% of users consume content online but do not contribute to it, $9$\% of users occasionally contribute content, and $1$\% of users account for most contributions of content. In our example, we consider a 90-10 decomposition of a population for simplicity, but one can also consider other situations (such as a 90-9-1 decomposition).

When the mean $\mu < 1$, the minority nodes represent ``open-minded'' entities that tend to change their opinions more frequently than normal nodes. When $\mu > 1$, these nodes represent ``stubborn'' entities that tend to preserve their opinions by interacting less frequently than normal nodes. 
We interpret nodes with long waiting times as stubborn entities, but similar ideas arise in a variety of contexts. Examples include immune nodes in compartmental models \cite{liu2017analysis}, media nodes in opinion models \cite{brooks2020model}, and (as in the present work) stubborn entities in opinion models \cite{ghaderi2014opinion}.

In particular, we consider the parameter values $\mu = 0.2$, $\mu = 1$, and $\mu = 5$ for the minority nodes, as example situations with open-minded nodes, normal nodes, and stubborn nodes, respectively. We generate one complete graph with random weights (see Figure \ref{fig: graphs}(c)) and examine the above three situations on this graph. All three situations have the same initial opinions, which we draw independently from the uniform distribution on $[0,1]$ for each node. We show the opinion trajectories in Figure \ref{fig: x-active-stubborn}. Open-minded nodes tend to converge quickly to the steady state, whereas stubborn nodes change their opinions slowly and attract the mean opinion towards their opinions.

\begin{figure}[htbp]
    \centering
    \includegraphics[width=0.3\textwidth]{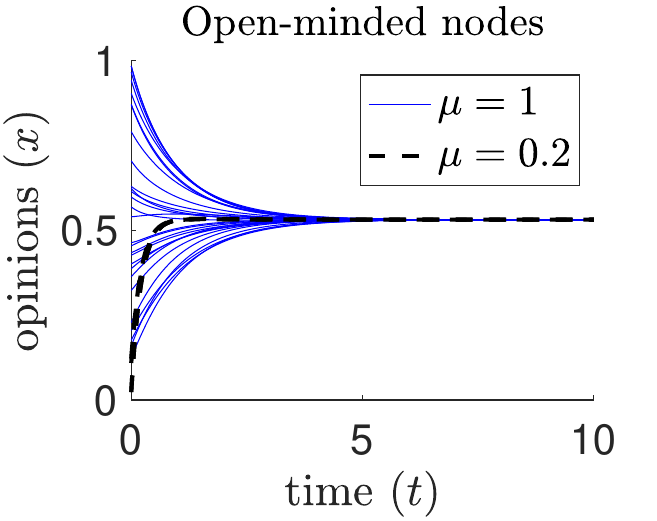}~
    \includegraphics[width=0.3\textwidth]{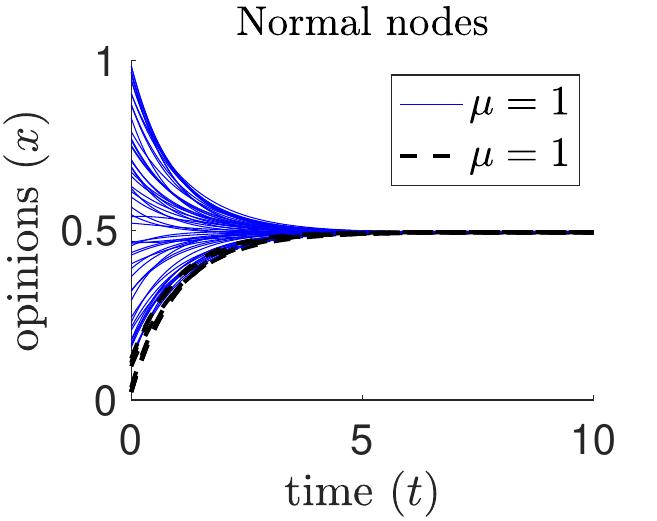}~
    \includegraphics[width=0.3\textwidth]{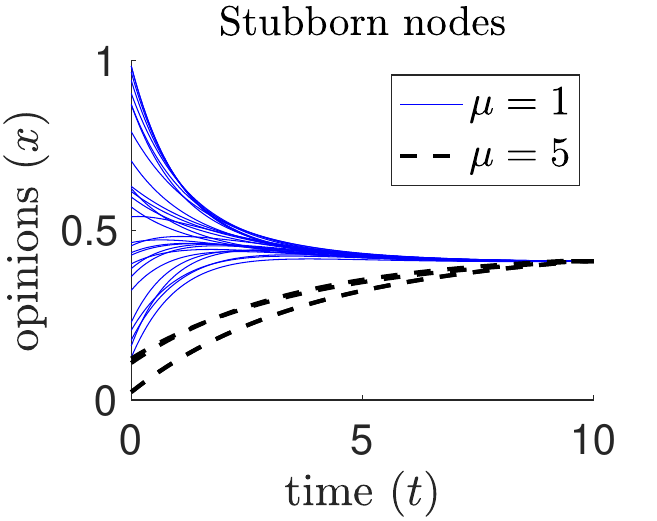}
    \caption{(Color online) Opinion trajectories of the opinion model \eqref{eq: exp-xi} with exponential WTDs on a complete weighted graph (see Figure \ref{fig: graphs}(c)). The graph has $30$ nodes and all nodes have an exponential WTD in \eqref{eq: Texp}. The WTD of $27$ nodes has a mean of $\mu=1$; {the $3$ nodes with the smallest initial opinion values} have means of (left) $\mu = 0.2$, (center) $\mu = 1$, and (right) $\mu = 5$. The steady-state mean opinions are (left) $0.5407$, (center) $0.5015$, and (right) $0.4096$.
    }
    \label{fig: x-active-stubborn}
\end{figure}

In our second example, we examine the effect of WTD heterogeneity on the steady-state opinion clusters when nodes have different types of WTDs. 
We consider a 90-10 decomposition and an 80-20 decomposition of the node WTD types, in which the majority of the nodes with the largest initial opinion values have one WTD (which we call the ``majority WTD'') and that the remaining of the nodes have another WTD (which we call the ``minority WTD''). 
We suppose that the majority WTD is either an exponential distribution \eqref{eq: Texp} with $\mu = 1$ or a Pareto distribution \eqref{eq: Tpareto} with $\alpha = 2$, so that the majority WTD always has a mean of $1$.
We suppose that the minority WTD is one of the six distributions in \eqref{eq: Ts}. We vary the mean of the minority distribution and examine the steady-state clusters of the models \eqref{eq: model-discrete} on a complete graph with random edge weights (see Figure \ref{fig: graphs}(c)).

We plot the steady-state clusters in Figure \ref{fig: heter-clusters}. For each steady-state opinion in Figure \ref{fig: heter-clusters}, we examine $100$ realizations and compute the mean steady-state opinions of the nodes across the realizations. In each realization, we generate a new graph and new initial opinions, which we draw independently from the uniform distribution on $[0,1]$.
We terminate each simulation of the models \eqref{eq: model-discrete} once the opinion variance is less than $10^{-7}$, and we assume that the dynamics have reached a steady state.

\begin{figure}[htbp]
    \centering
    \includegraphics[height=0.25\textwidth]{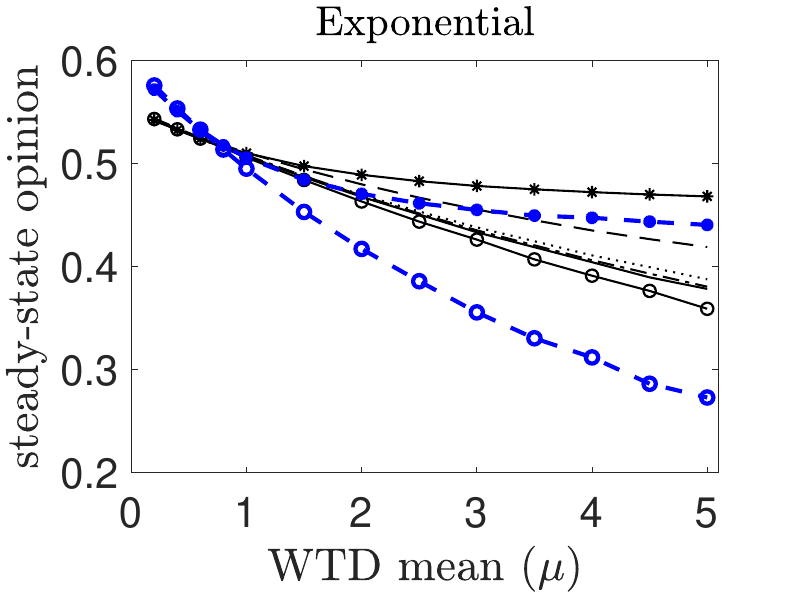}
    \includegraphics[height=0.25\textwidth]{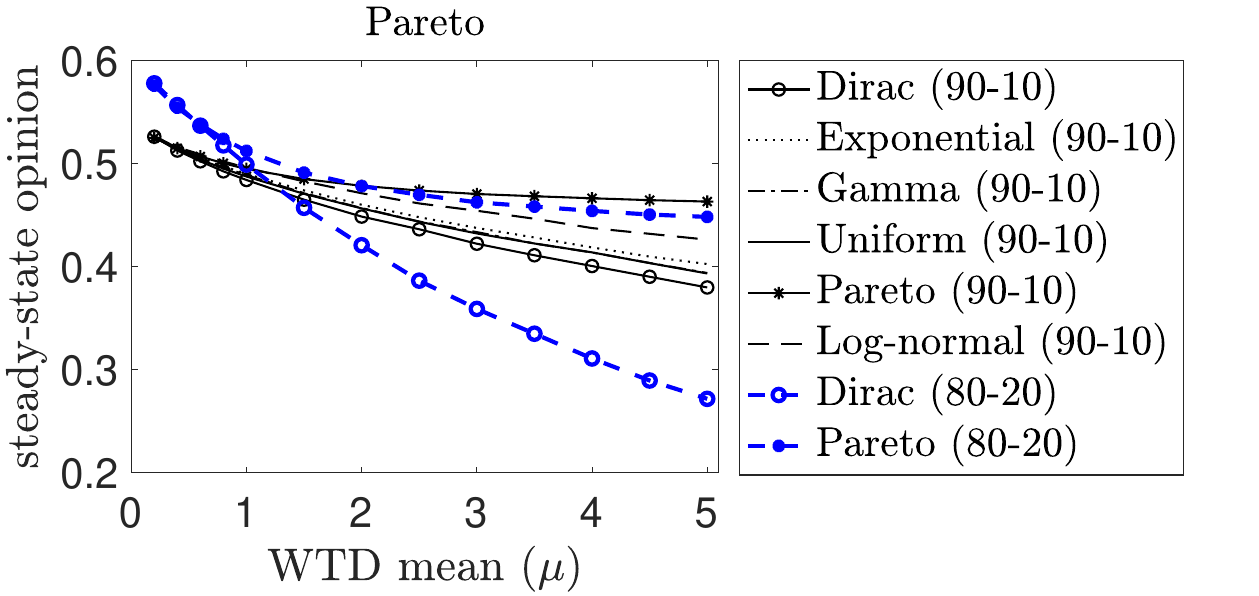}
    \caption{(Color online) Steady-state opinions when a majority of the nodes of a network have (left) an exponential WTD with a mean of $1$ and (right) a Pareto WTD with a mean of $1$. The horizontal axis is the mean value of the minority WTD, and the vertical axis gives the mean steady-state opinions across 100 realizations.
    We use black curves for the mean steady-state opinions in situations with a 90-10 population decomposition (in which 90\% of the nodes have the majority WTD and 10\% of the nodeshave the minority WTD) and blue curves for the mean steady-state opinions in situations with an 80-20 population decomposition. 
    }
    \label{fig: heter-clusters}
\end{figure}

In all of the combinations of WTDs, when the minor nodes are stubborn (i.e., their mean waiting time is larger than that of the normal nodes), the steady-state opinions are smaller than $0.5$, which is the expected value of the initial mean opinion.
The steady-state opinion decreases as we increase the mean waiting time of the stubborn nodes. Of the examined WTDs, stubborn nodes exert the least influence when they have heavy-tailed WTDs and exert the most influence when they have a Dirac delta WTD. The corresponding homogeneous model in which all nodes have the same WTD (either an exponential distribution or a Pareto distribution), which has a mean of $1$, has an expected steady-state mean opinion of $0.5$. When we increase the percentage of special nodes (whether they are stubborn or open-minded) from $10$\% to $20$\%, the heterogeneous models deviate more from their corresponding homogeneous models than when only 10\% of the nodes are special.

%%%%

\subsection{A short remark about polarized and fragmented steady states}

In our numerical simulations of both homogeneous opinion-dynamics models (see Section \ref{sec: numeric-homo}) and heterogeneous opinion dynamics models (see Section \ref{sec: numeric-heter}), we studied examples that converge to a consensus state. However, one can construct examples that converge to a polarized state (which has two distinct opinion clusters) or to a fragmented state (which has three or more distinct opinion clusters) by using graphs whose row-normalized adjacency matrices have more eigenvalues that are equal to $1$.

%%%%%%

\section{Conclusions and discussion} \label{sec: summary}

We proposed a family of memory-dependent models of opinion dynamics that depend on the waiting-time distributions of the nodes of a network.
Our models have continuous-valued opinions and account for memory effects in opinion dynamics. By contrast, to the best of our knowledge, all existing opinion models with continuous-valued opinions yield Markovian descriptions of the time evolution of opinions. 
In our models, the effects of memory emerge naturally from the non-Poisson interevent statistics of the edges of a network.
We illustrate our memory-dependent opinion models using several examples of common WTDs (including Dirac delta distributions, exponential distributions, gamma distributions, and heavy-tailed distributions). When the nodes have a Dirac delta WTD or an exponential WTD, our models have Markovian dynamics and are equivalent to the DeGroot model. When the nodes have a gamma WTD, we obtain a non-Markovian model in which each entity of a network tries to maintain the self-consistency of its opinion as it interacts with other entities. We also approximated heavy-tailed continuous-time WTDs with a sum of Dirac delta functions and derived an associated set of discrete-time opinion models.

We examined convergence to steady states in our models both theoretically and numerically for both homogeneous and heterogeneous scenarios. In homogeneous scenarios, in which all nodes of a network have the same WTD, the time-independent adjacency matrix of the network determines the steady-state opinion. However, the WTD affects the transient dynamics and the rate of the convergence to a steady state. We also observed that models with heavy-tailed WTDs converge more slowly than models with exponentially decaying WTDs. In heterogeneous scenarios, in which nodes have different WTDs (either the same type of WTD with different parameter values or WTDs of different types), ``stubborn'' nodes (i.e., nodes with longer mean waiting times than normal nodes) dominate the overall dynamics by {attracting the mean opinion in a network} towards their opinions.

In the present paper, we proposed non-Markovian models of opinion dynamics with continuous-valued opinions, and we studied some of the properties of these models. There are a variety of interesting ways to extend our investigation. 
We considered WTDs that do not depend on the states (i.e., opinions) of the nodes, and we also assumed that the weights $A_{ij}$ are time-independent. It seems fruitful to examine state-dependent weights (i.e., $A_{ij}(t) = w(x_i(t),x_j(t))$) and other time-dependent weights. For example, one can generalize bounded-confidence models of opinion dynamics \cite{hegselmann2002opinion,deffuant2000mixing} to incorporate memory effects. For example, after an event occurs, suppose that entity $i$ updates its opinion to that of entity $j$ only if their opinions differ from each other by no more than some threshold. 
In our framework, we also assumed that all entities of a network have independent interevent-time statistics, so we did not account for interactions that affect multiple entities at once. One can relax this independence assumption and examine the interdependence that results from coupled {stochastic processes.}
It is also worth considering interdependence between entity opinions and network structure in the form of adaptive (i.e., coevolving) networks \cite{sayama2013modeling}.
In our memory-dependent model that was induced by the gamma WTD, a damping term arose naturally; it promotes self-consistency of the opinion of each entity. It is also worthwhile to explore other memory-dependent models that account for the self-consistency of individual opinions. 
In our study, we examined some scenarios in which the entities of a network follow heterogeneous WTDs. However, we only considered two different WTDs at a time. It is important to investigate more diverse types of heterogeneity, such as systems with many WTDs or with WTDs with randomly-determined parameter values. 

%%%%%%

\section*{Acknowledgements}

We thank Mikko Kivel\"a, Renaud Lambiotte, Naoki Masuda, and our referees for helpful comments. MAP was funded by National Science Foundation (grant 1922952) through the Algorithms for Threat Detection (ATD) program.

%%%%%

\bibliographystyle{siamplain}
\bibliography{bibfile09}

\end{document}